\documentclass[lettersize,journal]{IEEEtran}
\IEEEoverridecommandlockouts

\usepackage{amssymb}
\usepackage[cmex10]{amsmath}
\usepackage{stfloats}
\usepackage{graphicx}
\usepackage{subfigure}
\usepackage{tabularx}
\usepackage{epsfig,epsf,color,balance,cite}
\usepackage{verbatim}
\usepackage{url}
\usepackage{bm}
\usepackage{booktabs}
\usepackage{siunitx}

\usepackage{amsmath}
\allowdisplaybreaks[2]
\usepackage{ntheorem}
\newtheorem{theorem}{Theorem}
\newtheorem{lemma}{Lemma}

\usepackage{algorithm}
\usepackage{algorithmic}

\usepackage{color}
\definecolor{myc1}{rgb}{0,0,0}

\begin{document}

\title{Agentic AI-Empowered Wireless Agent Networks With Semantic-Aware Collaboration via ILAC}

\author{Zhouxiang Zhao, 
        Jiaxiang Wang,
        Zhaohui Yang, 
        Kun Yang,
        Zhaoyang Zhang,~\IEEEmembership{Senior Member,~IEEE,}\\
        Mingzhe Chen,~\IEEEmembership{Senior Member,~IEEE,}
        and Kaibin Huang,~\IEEEmembership{Fellow,~IEEE}

\thanks{Zhouxiang Zhao, Zhaohui Yang, Kun Yang, and Zhaoyang Zhang are with the College of Information Science and Electronic Engineering, Zhejiang University, and also with Zhejiang Provincial Key Laboratory of Info. Proc., Commun. \& Netw. (IPCAN), Hangzhou 310027, China (e-mails: \{zhouxiangzhao, yang\_zhaohui, kunyang20, ning\_ming\}@zju.edu.cn).}
\thanks{Jiaxiang Wang is with the Department of Engineering, King's College London, London, WC2R 2LS, UK (e-mail: jiaxiang.wang@kcl.ac.uk).}
\thanks{Mingzhe Chen is with Department of Electrical and Computer Engineering and Institute for Data Science and Computing, University of Miami, Coral Gables, FL 33146, USA (e-mail: mingzhe.chen@miami.edu).}
\thanks{Kaibin Huang is with the Department of Electrical and Electronic Engineering, The University of Hong Kong, Hong Kong SAR, China (e-mail: huangkb@eee.hku.hk).}
}

\maketitle

\begin{abstract}
The rapid development of agentic artificial intelligence (AI) is driving future wireless networks to evolve from passive data pipes into intelligent collaborative ecosystems under the emerging paradigm of integrated learning and communication (ILAC). However, realizing efficient agentic collaboration faces challenges not only in handling semantic redundancy but also in the lack of an integrated mechanism for communication, computation, and control. To address this, we propose a wireless agent network (WAN) framework that orchestrates a progressive knowledge aggregation mechanism.
Specifically, we formulate the aggregation process as a joint energy minimization problem where the agents perform semantic compression to eliminate redundancy, optimize transmission power to deliver semantic payloads, and adjust physical trajectories to proactively enhance channel qualities.
To solve this problem, we develop a hierarchical algorithm that integrates inner-level resource optimization with outer-level topology evolution.
Theoretically, we reveal that incorporating a potential field into the topology evolution effectively overcomes the short-sightedness of greedy matching, providing a mathematically rigorous heuristic for long-term energy minimization.
Simulation results demonstrate that the proposed framework achieves superior energy efficiency and scalability compared to conventional benchmarks, validating the efficacy of semantic-aware collaboration in dynamic environments.
\end{abstract}

\begin{IEEEkeywords}
Internet of agents, multi-agent collaboration, semantic communication, integrated learning and communication.
\end{IEEEkeywords}

\IEEEpeerreviewmaketitle

\section{Introduction}
\IEEEPARstart{T}{he} evolution of sixth-generation (6G) networks is witnessing a paradigm shift from connecting humans and devices to connecting autonomous intelligences, specifically embodied artificial intelligence (AI) agents \cite{11339915,11370176,11152698}. 
Driven by the recent proliferation of foundation models, the agents, equipped with sensing, reasoning, and actuating capabilities, are transforming into agentic AI, capable of autonomously perceiving complex environments and executing collaborative tasks without human intervention \cite{11373363,11303197}. 
This transformation has given rise to the concept of Internet of Agents (IoA) \cite{11207716}, where the primary objective evolves from simple data exchange to high-level knowledge sharing and collaborative problem-solving. 
To effectively support such advanced agentic collaboration, it is imperative to embrace the paradigm of integrated learning and communication (ILAC) \cite{11381448}, which seamlessly couples AI-driven cognitive inference with wireless transmission.

In mission-critical scenarios such as large-scale surveillance, disaster response, and battlefield reconnaissance, a swarm of mobile agents is often deployed to cover a wide geographic area \cite{10599391}. 
Conventionally, these systems operate on a ``sense-and-forward" basis, where the agents capture raw sensory data (e.g., high-resolution images or video streams) and transmit the packets to a central sink for processing. 
However, this traditional paradigm faces severe scalability challenges. 
The transmission of massive raw data imposes prohibitive demands on the limited wireless spectrum and rapidly depletes the onboard battery energy of mobile agents. 
More critically, raw data streams often contain significant semantic redundancy. For instance, multiple agents may capture overlapping views of the same object, or a single agent may transmit static background frames that convey negligible information gain for the overall situational awareness \cite{11395598}.

To address these inefficiencies, semantic communication has emerged as a promising solution, shifting the focus from the accurate transmission of bits to the effective conveyance of meaning \cite{gunduz2022beyond,10915662}. 
By leveraging the generative and inferential capabilities of embodied large models (ELMs), the agents can now locally process raw observations into compact semantic representations \cite{10720863}. 
Furthermore, the integration of retrieval-augmented generation (RAG) allows the agents to synthesize new observations with existing knowledge bases, effectively compressing information based on contextual relevance before transmission \cite{11303308,11297177}. 
This capability is pivotal for realizing what we term a wireless agent network (WAN), where the network load is determined not by the volume of pixels, but by the density of valuable knowledge.

Despite these advancements, realizing an energy-efficient WAN for collaboration remains a challenge. 
Unlike static sensor networks or point-to-point semantic links, a mobile agent network involves a complex coupling of physical mobility, computational inference, and wireless transmission \cite{11370843}. 
The agents must autonomously decide where to move to optimize channel conditions, how much computation resource to invest in semantic compression, and how to dynamically form network topologies to aggregate dispersed information. 
Existing works on multi-agent systems often treat communication, computation, and control as separate optimization problems or rely on fixed topologies that fail to exploit the progressive knowledge aggregation effect, where information becomes increasingly refined and compact as it propagates through the network hierarchy \cite{11373008}. 
Therefore, there is an urgent need for a holistic framework that jointly orchestrates agent mobility, ELM-based semantic processing, and dynamic collaborative networking to minimize system-wide energy consumption while ensuring rigorous latency and coverage constraints.

\subsection{Related Works}
This section reviews the state-of-the-art across three pivotal domains essential to the proposed WAN: mobile collaborative sensing and computing, semantic communication and generative AI, agentic AI and IoA.

\subsubsection{Mobile Collaborative Sensing and Computing}
Mobile collaborative sensing, encompassing paradigms such as mobile edge computing (MEC) \cite{8016573,11159303} and wireless sensor network (WSN) \cite{6747349}, has been extensively studied as a means to extend perception capabilities beyond individual limits. 
Early works focused on optimizing the trajectories of uncrewed aerial vehicles (UAVs) to maximize coverage or minimize energy consumption while satisfying connectivity constraints \cite{9354739,9454157}. 
For instance, extensive research has been dedicated to clustering algorithms and data aggregation trees to reduce redundant transmissions \cite{7112038,10054625}. 
However, these traditional approaches typically treat data as bitstreams, focusing on metrics such as throughput and packet error rate. 
In the context of mission-critical surveillance where the agents capture high-dimensional sensory data, such bit-centric methodologies often lead to excessive bandwidth consumption and latency, as they fail to exploit the inherent semantic sparsity of the information. 

\subsubsection{Semantic Communication and Generative AI}
The paradigm shift from syntax-oriented to goal-oriented communication has garnered significant attention \cite{11220256}. 
Semantic communication systems utilize deep learning models to extract and transmit only the essential meaning of source data, significantly reducing the transmission burdens \cite{11006980,9955312}. 
Recently, the integration of generative AI and large language models has further empowered this domain \cite{10841377}. 
Generative AI-aided frameworks enable the receivers to reconstruct high-fidelity data from highly compressed semantic prompts by leveraging shared knowledge bases \cite{10679152}. 
Despite these advancements, the majority of existing semantic communication literature focuses on optimizing point-to-point links \cite{11008477,ZHAO2024107055}. 
The research on networked semantic aggregation, where the information from multiple sources must be fused, de-duplicated, and evolved hierarchically, remains nascent. 

\subsubsection{Agentic AI and IoA}
The evolution of 6G is propelling the transition from connected devices to connected intelligence, giving rise to the IoA \cite{10648594,10504634}. 
Recent surveys have outlined the architectural vision of IoA, emphasizing the need for protocols that support intent-based interaction and collective intelligence \cite{11207716,chen2024internet}. 
However, existing studies on agentic AI and IoA often focus on high-level multi-agent coordination strategies while abstracting away the underlying physical network constraints \cite{li2025llm,tang2025enhanced,11298134}. 
There is a lack of rigorous mathematical frameworks that capture the intricate coupling between the agents' physical mobility, their cognitive inference processes, and their semantic interactions. 

\subsection{Contributions}
To bridge the gap between high-level agentic collaboration and underlying physical network constraints, we propose the WAN framework. 
Distinct from conventional bit-centric forwarding and static point-to-point semantic links, we introduce a progressive knowledge aggregation mechanism. 
This paradigm orchestrates network-level semantic fusion and deduplication, enabling the information to evolve from fragmented raw data into compact, high-value knowledge as it propagates through the network. 
Furthermore, we establish a rigorous mathematical formulation that seamlessly couples the agents' physical mobility, ELM-based cognitive inference, and wireless transmission. 
This holistic modeling addresses the critical lack of physical-layer grounding in existing IoA studies, transforming abstract agentic capabilities into physically realizable network operations. 
To the best of our knowledge, this is the first work to systematically investigate the agentic AI-empowered WAN. The main contributions of this paper are summarized as follows:
\begin{itemize}
    \item We propose a novel WAN framework that transcends traditional bit-centric data forwarding by introducing a progressive knowledge aggregation mechanism. By integrating ELMs with RAG, the framework empowers agents to progressively fuse, deduplicate, and semantically compress distributed sensory data as it propagates through a dynamically evolving topology. This paradigm shift ensures that network load is determined by knowledge density rather than raw pixel volume, significantly reducing redundant traffic.
    \item We establish a unified mathematical framework that deeply couples the agents' physical mobility, semantic inference, and wireless transmission. Unlike existing multi-agent coordination studies that abstract away physical-layer realities, our holistic approach explicitly captures the intricate trade-offs among trajectory adjustments, compression operations, and transmit power allocations, ensuring that the high-level semantic collaboration is strictly grounded by physical latency deadlines and geometric security coverage constraints.
    \item To resolve the intractable coupling between continuous resource allocation and discrete network formation, we design a hierarchical matching and adaptive planning (H-MAP) framework. The core algorithmic novelty lies in our predictive topology evolution strategy, which incorporates a physical potential field into the graph matching process. This theoretically grounded design anticipates future aggregation trajectories and costs, effectively overcoming the severe short-sightedness of conventional greedy matching methods and ensuring scalable, long-term network energy minimization.
\end{itemize}

The remainder of this paper is organized as follows. 
Section \ref{Sec:sm} introduces the system model of the proposed WAN. 
Section \ref{sec:problem_formulation} mathematically formulates the network energy minimization problem as a mixed-integer non-linear programming (MINLP) task. 
To address the computational intractability, Section \ref{Sec:ad} proposes the H-MAP algorithm, which decouples the problem into inner-level joint motion-resource optimization and outer-level dynamic topology planning. 
Section \ref{Sec:sr} presents extensive simulation results to validate the effectiveness and scalability of the proposed scheme. 
Finally, Section \ref{Sec:c} concludes the paper.

\section{System Model}\label{Sec:sm}

\begin{figure}[t]
    \centering
    \includegraphics[width=\linewidth]{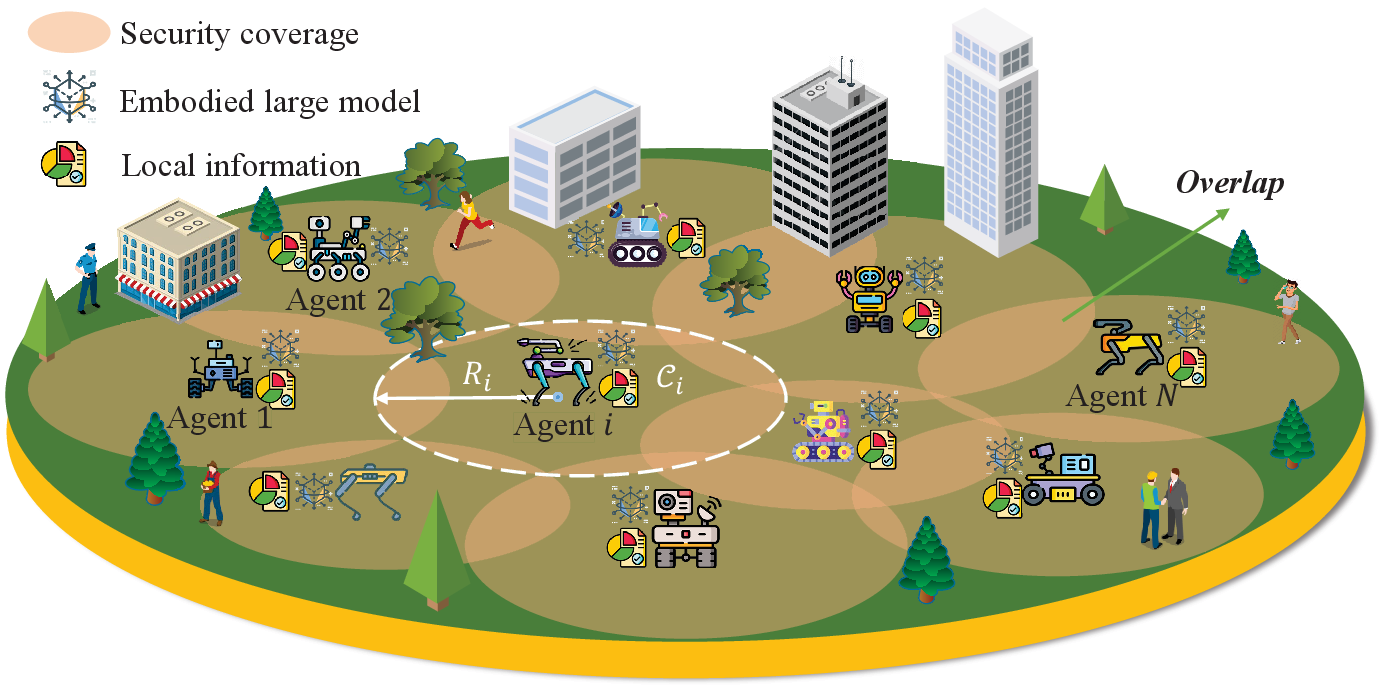}
    \caption{An illustration of the considered  wireless AI agent-assisted surveillance scenario.}
    \label{fig.scenario}
\end{figure}

\subsection{General Setup}
\subsubsection{Scenario Description}
Consider a mission-critical surveillance scenario deployed over a large-scale geographic area, served by a set of $N$ autonomous mobile agents, denoted by $\mathcal{N} = \{1, 2, \dots, N\}$. To ensure continuous and ubiquitous security coverage, each agent $i \in \mathcal{N}$ is mandated to patrol within a specific circular region $\mathcal{C}_i$, as shown in Fig.~\ref{fig.scenario}.

The primary objective of the considered WAN is to aggregate the distributed, local sensing data from all agents into a comprehensive global situation report at a sink node. Unlike conventional sensor networks that forward raw data packets, the agents in our system are equipped with ELMs. This capability enables them to perform semantic processing, leveraging RAG to fuse incoming information with local observations, thereby eliminating redundancy and synthesizing high-level artifacts.

\begin{figure}[t]
    \centering
    \includegraphics[width=\linewidth]{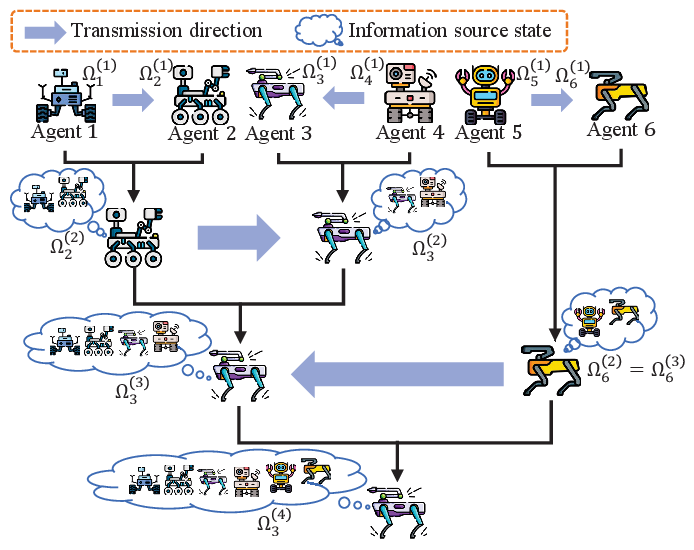}
    \caption{An example of the progressive knowledge aggregation mechanism: the information of six agents is aggregated via pair transmission.}
    \label{fig.tree}
\end{figure}

\subsubsection{Workflow}
The information aggregation process proceeds in discrete time rounds, indexed by $k = 1, 2, \dots, K$. The workflow is characterized by a progressive knowledge aggregation mechanism, where the system evolves from a fully distributed state to a centralized state, as illustrated in Fig.~\ref{fig.tree}. The operation in each round $k$ involves two coupled levels of decision-making:
\begin{itemize}
    \item \textbf{Outer-Level Topology Evolution:} The system dynamically determines a pairing strategy based on the current spatial distribution and states of the agents. This topology must satisfy physical connectivity constraints while aiming to minimize network-wide energy consumption.
    \item \textbf{Inner-Level Interaction Protocol:} Once a pair of agents $(i, j)$ is established, they execute a joint move-compute-communicate protocol. Specifically, the agents autonomously optimize their physical positions within their respective patrol regions to improve channel conditions, utilize local ELMs to compress data based on semantic correlation, and transmit the synthesized payload.
\end{itemize}


\subsection{Network Geometry and Information State}\label{subsec:network_geometry}
\subsubsection{Geometric Constraints and Patrol Regions}
The surveillance area is partitioned into $N$ distinct, potentially overlapping sub-regions, each assigned to a specific agent for continuous monitoring. While agents are endowed with mobility to optimize communication channels, their movement is strictly confined to maintain service availability within their respective zones. Consequently, the instantaneous position of agent $i$, denoted by $\mathbf{p}_i \in \mathbb{R}^2$, is subject to the following hard geometric constraint:
\begin{equation}
    \| \mathbf{p}_i - \mathbf{c}_i \|_2 \le R_i, \quad \forall i \in \mathcal{N},
    \label{eq:geo_constraint}
\end{equation}
where $\mathbf{c}_i \in \mathbb{R}^2$ represents the center coordinate, $R_i$ is the coverage radius, and $\| \cdot \|_2$ denotes the Euclidean norm.

\subsubsection{Information Source State Evolution}
The core objective of the system is to aggregate spatially distributed sensing data into a global report. Unlike conventional packet-forwarding networks, our system emphasizes the provenance of information to facilitate semantic deduplication.

We define the information source state $\Omega_i^{(k)} \subseteq \mathcal{N}$ as the set of unique agent indices whose original sensing data is contained within the memory of agent $i$ at the beginning of round $k$. The evolution of this state follows a set-theoretic union rule. 
\begin{itemize}
    \item \textbf{Initialization:} At the initial round $k=1$, each agent possesses only its local sensing data. Thus, the state is initialized as $\Omega_i^{(1)} = \{i\}, \forall i \in \mathcal{N}$.
    \item \textbf{Recursive Update:} During round $k$, if a directed communication link is established from agent $i$ (sender) to agent $j$ (receiver), agent $j$ acquires not only the current data payload of agent $i$ but also the semantic context of all underlying information sources accumulated by agent $i$. Consequently, the information source state of the receiver updates as $\Omega_j^{(k+1)} = \Omega_j^{(k)} \cup \Omega_i^{(k)}$.
\end{itemize}
Upon successful transmission, the sender $i$ ceases participation in subsequent aggregation rounds, effectively pruning the active topology.

\subsubsection{Accumulated Data Payload Dynamics}
A critical distinction of our model is the handling of data volume. In semantic communication networks empowered by ELMs, the size of the aggregated data is not a linear summation of raw data packets. Instead, it is a dynamic quantity resulting from successive semantic compression and fusion operations.

Let $L_{i}^{(k)}$ (in bits) denote the actual size of the accumulated data payload carried by agent $i$ at the start of round $k$. The evolution of $L_{i}^{(k)}$ is history-dependent, reflecting the efficacy of compression in all preceding interactions. When agent $j$ receives data from agent $i$ in round $k$, the new payload size $L_{j}^{(k+1)}$ is formulated as:
\begin{equation}
    L_{j}^{(k+1)} = \eta_{j}L_{j}^{(k)} + \eta_{i}L_{i}^{(k)},
    \label{eq:payload_update}
\end{equation}
where $\eta_i$ and $\eta_j$ represents the semantic compression ratio applied to agent $i$'s and agent $j$'s payload during the interaction, respectively. 


\subsection{Mobility Model}\label{subsec:mobility_model}
\subsubsection{Mobility Dynamics}
In the proposed collaborative framework, agents are capable of autonomous movement to optimize communication channel conditions. Consider a specific interaction round $k$, where agent $i$ plans to move from its current initial position $\mathbf{p}_i^{\text{start}}$ to an optimized target position $\mathbf{p}_i^{\text{end}}$ within its patrol region $\mathcal{C}_i$. We assume the agent traverses the Euclidean path between these points with a constant cruising velocity $v_i$. The mobile distance is denoted as $d_{i} = \|\mathbf{p}_i^{\text{end}} - \mathbf{p}_i^{\text{start}}\|_2$, and the corresponding motion time can be given by $T^\text{mob}_i = d_{i}/{v_i}$, where $v_i$ is a decision variable subject to mechanical actuation constraints, upper-bounded by $v_i^{\max}$.

\subsubsection{Power and Energy Consumption}
To capture the energy expenditure of physical movement, we adopt a generic polynomial power consumption model that accounts for the distinct mechanical characteristics of ground-based mobile agents \cite{1638342}. The instantaneous power consumption $P^\text{mob}_i$ as a function of velocity $v_i$ is modeled as:
\begin{equation}
    P^\text{mob}_i(v_i) = P^{\text{static}}_i + \kappa_{1,i} v_i + \kappa_{2,i} v_i^2,
\end{equation}
where $P^{\text{static}}_i$ denotes the baseline power required to sustain essential onboard functionalities. The coefficients $\kappa_{1,i}$ and $\kappa_{2,i}$ represent the parameters characterizing the speed-dependent power consumption, where the linear term accounts for the dominant mechanical resistance and the quadratic term serves as an empirical approximation for internal motor losses, such as the eddy-current loss \cite{kuo1978dc}.

Consequently, the total energy consumed by agent $i$ for the mobility phase is the integral of power over time, which simplifies to:
\begin{equation}
    E^\text{mob}_i = \|\mathbf{p}_i^{\text{end}} - \mathbf{p}_i^{\text{start}}\|_2 \cdot \left( \frac{P_i^{\text{static}}}{v_i} + \kappa_{1,i} + \kappa_{2,i} v_i \right).
    \label{eq:mobility_energy}
\end{equation}


\subsection{Semantic Correlation and Computation Cost Model}\label{subsec:semantic_model}
\subsubsection{Spatio-Semantic Correlation Model}
In the proposed WAN framework, the efficiency of semantic compression is governed by the contextual overlap between the local information of an agent and its received information.

Consider the pairwise interaction in round $k$ where agent $i$ intends to transmit to agent $j$. Before this transmission, both agents fuse the data they acquired in the previous round ($k-1$). Let agent $m$ be the predecessor that transmitted to agent $i$, and agent $n$ be the predecessor that transmitted to agent $j$. We define the spatio-semantic correlation coefficient, $\rho_{u,v}^{(k)}$, for an agent $u \in \{i, j\}$ with respect to its predecessor $v \in \{m, n\}$.


First, the accumulated knowledge coverage $\mathcal{R}_u^{(k)}$ is defined as the physical union of the patrol regions associated with all information sources stored in agent $u$: $\mathcal{R}_u^{(k)} = \bigcup_{s \in \Omega_u^{(k)}} \mathcal{C}_s$.
The correlation coefficient is modeled as the Jaccard similarity index of the coverage areas:
\begin{equation}
    \rho_{u,v}^{(k)} = \frac{\text{Area}\left(\mathcal{R}_u^{(k)} \cap \mathcal{R}_v^{(k)}\right)}{\text{Area}\left(\mathcal{R}_u^{(k)} \cup \mathcal{R}_v^{(k)}\right)}, \quad \rho_{u,v}^{(k)} \in [0, 1].
\end{equation}
Note that in the initial round, $\rho_i^{(1)} = 1, \forall i \in \mathcal{N}$, since all agents only possess their own information.

Abstracting complex RAG-based deduplication via spatial overlap is mathematically essential for network optimization. Since surveillance semantics are anchored to geographic coordinates, intersecting patrol regions naturally capture redundant data. By adopting the Jaccard similarity of 2D areas as a tractable proxy, higher geometric overlap guarantees a richer shared context. This enables ELMs to synthesize compact representations, bridging network geometry with AI-driven compression efficiency.

\subsubsection{Computation Cost Model}
The system workflow dictates that the computation phase occurs simultaneously with the mobility phase. Both the sender $i$ and the receiver $j$ perform local computations to fuse and compress the data received from their respective predecessors before the new transmission $i \to j$ occurs.

Inspired by the principles of MEC \cite{8016573} and ILAC \cite{11381448}, the computational complexity (in FLOPs) for any active agent $u \in \{i, j\}$ processing data from predecessor $v \in \{m, n\}$ is modeled as:
\begin{equation}
    W^\text{comp}_u = L_{u}^{(k)} \left[ C_{\text{base}} + C_{\text{gen}} \gamma \left(1 - \rho_{u,v}^{(k)}\right) \ln \frac{1}{\eta_u} \right],
\end{equation}
where $C_{\text{base}}$ (in FLOPs/bit) is the fundamental processing cost, $C_{\text{gen}}$ is the generative cost coefficient, $\gamma$ represents the compression complexity factor, and $\eta_u$ is the semantic compression ratio of agent $u$ defined as \cite{10550151}:
\begin{equation}
    \eta_u = \frac{\text{size}(\text{Compressed Data})}{\text{size}(\text{Original Data})} = \frac{L_{u}^{(k+1)}}{L_{u}^{(k)}}.
\end{equation}

The time required for agent $u$ to complete this computation is given by $T^\text{comp}_u = \frac{W^\text{comp}_u}{f_u}$, where $f_u$ is the computational capacity of agent $u$.
The corresponding energy consumption is $E^\text{comp}_u = \tau f_u^2 W^\text{comp}_u$, where $\tau$ is the effective capacitance coefficient \cite{7762913}.

\subsection{Communication Model}\label{subsec:comm_model}
\subsubsection{Path Loss Channel Model}
Following the local computation phase, the sender agent $i$ initiates data transmission to the receiver agent $j$. The quality of the wireless channel is primarily determined by the Euclidean distance between their positions, $\mathbf{p}_i^{\text{end}}$ and $\mathbf{p}_j^{\text{end}}$. We adopt a large-scale fading model where the channel power gain $h_{ij}$ is dominated by path loss:
\begin{equation}
    h_{ij}(\mathbf{p}_i^{\text{end}}, \mathbf{p}_j^{\text{end}}) = \beta_0 \cdot \|\mathbf{p}_i^{\text{end}} - \mathbf{p}_j^{\text{end}}\|_2^{-\delta},
\end{equation}
where $\beta_0$ represents the channel gain at a reference distance of \SI{1}{m}, and $\delta \ge 2$ is the path loss exponent reflecting the propagation environment.

\subsubsection{Transmission Time and Energy}
The achievable transmission rate $R_{ij}$ (in bits/s) is given by the Shannon-Hartley theorem:
\begin{equation}\label{eq:rate_shannon}
    R_{ij} = B \log_2 \left( 1 + \frac{p_i^{\text{tx}} h_{ij}}{B N_0} \right),
\end{equation}
where $B$ is the channel bandwidth, $p_i^{\text{tx}}$ is the transmission power, and $N_0$ is the noise power spectral density at the receiver. We assume orthogonal frequency channels are assigned to active pairs to avoid interference.

Let $D_i^{\text{out}} = \eta_i L_{i}^{(k)}$ denote the size of the compressed semantic payload to be transmitted. Then, the communication time required to complete the transmission can be given by:
\begin{equation}
    T_i^{\text{comm}} = \frac{D_i^{\text{out}}}{R_{ij}} = \frac{\eta_i L_{i}^{(k)}}{B \log_2 \left( 1 + \frac{p_i^{\text{tx}} \beta_0 \|\mathbf{p}_i^{\text{end}} - \mathbf{p}_j^{\text{end}}\|_2^{-\delta}}{B N_0} \right)}.
    \label{eq:comm_time}
\end{equation}
The corresponding communication energy consumption is:
\begin{equation}
    E_i^{\text{comm}} = p_i^{\text{tx}} T_i^{\text{comm}} = \frac{p_i^{\text{tx}} \eta_i L_{i}^{(k)}}{B \log_2 \left( 1 + \frac{p_i^{\text{tx}} \beta_0 \|\mathbf{p}_i^{\text{end}} - \mathbf{p}_j^{\text{end}}\|_2^{-\delta}}{B N_0} \right)}.
    \label{eq:comm_energy}
\end{equation}

Fig.~\ref{fig.time} shows the collaborative interplay between movement, computation, and communication across two paired agents over time.

\begin{figure}[t]
    \centering
    \includegraphics[width=0.9\linewidth]{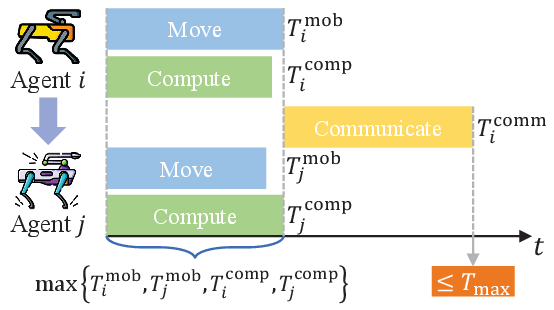}
    \caption{Joint move-compute-communicate protocol over time between two paired agents.}
    \label{fig.time}
\end{figure}

\subsection{Dynamic Topology and Aggregation Graph}
To facilitate the progressive knowledge aggregation mechanism, the network topology evolves dynamically across discrete rounds. We model the potential connections in round $k$ as a time-varying directed graph $\mathcal{G}^{(k)}=\left(\mathcal{A}^{(k)}, \mathcal{E}^{(k)}\right)$, where $\mathcal{A}^{(k)}$ denotes the set of currently active agents and $\mathcal{E}^{(k)}$ represents the set of feasible communication links between them.

\subsubsection{Active Agent Set Evolution}
Let $\mathcal{A}^{(k)} \subseteq \mathcal{N}$ denote the set of active agents at the beginning of round $k$, representing the nodes that currently hold the aggregated information state. The process initializes with all agents being active, i.e., $\mathcal{A}^{(1)} = \mathcal{N}$.
The evolution of the active set is driven by the dynamic pairing decisions. Specifically, if agent $i$ successfully transmits its accumulated payload to agent $j$ in round $k$, agent $i$ has fulfilled its aggregation duty for the current hierarchical level and enters a ``retired'' state. Conversely, agent $j$ remains active to carry the fused information for subsequent rounds. The state update rule is governed by:
\begin{equation}
    \mathcal{A}^{(k+1)} = \mathcal{A}^{(k)} \setminus \left\{ i \in \mathcal{A}^{(k)} \mid \exists j \in \mathcal{A}^{(k)}, \alpha_{i,j}^{(k)} = 1 \right\},
\end{equation}
where $\alpha_{i,j}^{(k)}$ is the binary topology decision variable defined below. The aggregation process terminates when the active set converges to a single root node (i.e., $|\mathcal{A}^{(K)}| = 1$). This final active agent serves as the dynamic aggregation root, responsible for the implicit final delivery of the global report.

\subsubsection{Topology Decision Variables}
We define the binary decision variable $\alpha_{i,j}^{(k)} \in \{0,1\}$ to indicate whether a directed communication link is established from agent $i$ to agent $j$ in round $k$:
\begin{equation}
    \alpha_{i,j}^{(k)} = 
    \begin{cases} 
    1, & \text{if agent } i \text{ transmits to agent } j \text{ in round } k, \\
    0, & \text{otherwise.}
    \end{cases}
\end{equation}
In this framework, the topology is constructed to form a valid hierarchical aggregation structure. Specifically, a converging forest that eventually evolves into a single tree rooted at the dynamically selected optimal agent. By allowing the WAN to autonomously determine the convergence path, the system can optimize the aggregation root selection based on the agents' real-time spatial and semantic states.

\section{Problem Formulation}\label{sec:problem_formulation}

The collaborative data aggregation process is mathematically formulated as a hierarchical optimization problem. This framework decomposes the complex system-wide objective into two coupled sub-problems: a inner-level optimization for pairwise interactions and a outer-level optimization for network topology evolution.

\subsection{Pairwise Interaction Optimization (Inner-Level)}
Consider a candidate pair of agents $(i, j)$ in a specific round, where agent $i$ acts as the sender and agent $j$ as the receiver. For brevity, the round index $(k)$ is omitted in this part. The total energy consumption for this interaction, denoted as $E^{\text{total}}_{(i,j)}$, is the sum of mobility, computation, and communication energy expenditures:
\begin{equation}
    E^{\text{total}}_{(i,j)} = E_i^{\text{mob}} + E_j^{\text{mob}} + E_i^{\text{comp}} + E_j^{\text{comp}} + E_i^{\text{comm}}.
    \label{eq:total_energy_pairwise}
\end{equation}

The objective is to minimize $E^{\text{total}}_{(i,j)}$ by jointly optimizing the physical movement, computational parameters, and communication resources. Let $\mathbf{x}_{ij} = \{ \mathbf{p}_i^{\text{end}}, \mathbf{p}_j^{\text{end}}, v_i, v_j, p_i^{\text{tx}}, \eta_i, \eta_j \}$ denote the set of decision variables. The pairwise optimization problem (P1) is formulated as:
\begin{subequations}
\begin{align}
    (\text{P1}): \quad \min_{\mathbf{x}_{ij}} \quad & E^{\text{total}}_{(i,j)} \tag{\theequation} \\
    \text{s.t.} \quad & \|\mathbf{p}_u^{\text{end}} - \mathbf{c}_u\|_2 \le R_u, \quad \forall u \in \{i, j\}, \label{eq:constraint_coverage} \\
    & \max \left\{ T_i^{\text{mob}}, T_j^{\text{mob}}, T_i^{\text{comp}}, T_j^{\text{comp}} \right\} + T_i^{\text{comm}} \notag \\
    & \hspace{1.6in}\le T_{\max}, \label{eq:constraint_latency} \\
    & 0 < v_u \le v_u^{\max}, \quad \forall u \in \{i, j\}, \label{eq:constraint_v} \\
    & 0 < p_i^{\text{tx}} \le P_i^{\max}, \label{eq:constraint_p} \\
    & \eta_i^{\min} \le \eta_i \le 1, \label{eq:constraint_etai} \\
    & \eta_j^{\min} \le \eta_j \le \eta_j^{\text{req}}, \label{eq:constraint_etaj}
\end{align}
\end{subequations}
where $P_i^{\max}$ denotes the maximum transmit power of agent~$i$, and $\eta_u^{\min},\forall u \in \{i, j\}$ represents the semantic compression limit. Note that the parameter $\eta_j^{\text{req}}$ specifies the requisite compression upper bound for the receiver $j$, serving as a global directive to prevent payload explosion in subsequent aggregation stages.

Regarding the constraints, \eqref{eq:constraint_coverage} enforces the mandatory geometric security coverage, restricting agents to their designated patrol regions. Constraint \eqref{eq:constraint_latency} guarantees that the interaction satisfies the strict latency deadline $T_{\max}$, explicitly capturing the protocol structure where mobility and computation occur in parallel prior to the communication phase. Finally, constraint \eqref{eq:constraint_etaj} ensures that the receiver performs sufficient semantic deduplication to maintain network scalability.

\subsection{Dynamic Topology Evolution (Outer-Level)}
Based on the optimized pairwise interaction costs derived from the inner-level analysis, the system-wide objective is to construct an energy-efficient aggregation topology. This corresponds to finding the optimal sequence of pairing decisions $\boldsymbol{\alpha} = \left\{\alpha_{i,j}^{(k)}\right\}$ across all rounds that minimizes the cumulative network energy consumption.

Let $W_{i,j}^{(k)}$ denote the minimum energy cost for the pair $(i, j)$ in round $k$, obtained by solving the inner-level problem (P1). If problem (P1) is infeasible, we set $W_{i,j}^{(k)} = \infty$ to penalize physically unrealizable links. To enforce the maximum parallelism strategy, the number of required pairings in each round is deterministic. Let $N^{(k)} = |\mathcal{A}^{(k)}|$ denote the number of active agents at the start of round $k$. The required number of links to be established is $M^{(k)} = \lfloor N^{(k)} / 2 \rfloor$. Then, the outer-level optimization problem (P2) can be formulated as:
\begin{subequations}
\begin{align}
    (\text{P2}): \quad \min_{\boldsymbol{\alpha}} \quad & \sum_{k=1}^{K} \sum_{i \in \mathcal{A}^{(k)}} \sum_{j \in \mathcal{A}^{(k)}, j \neq i} \alpha_{i,j}^{(k)} W_{i,j}^{(k)} \tag{\theequation} \\
    \text{s.t.} \hspace{1.1em} & \sum_{j \in \mathcal{A}^{(k)}, j \neq i} \alpha_{i,j}^{(k)} + \sum_{l \in \mathcal{A}^{(k)}, l \neq i} \alpha_{l,i}^{(k)} \le 1, \notag \\
    & \hspace{1.2in} \forall i \in \mathcal{A}^{(k)}, \forall k, \label{eq:role_mutex} \\
    & \sum_{i \in \mathcal{A}^{(k)}} \sum_{j \in \mathcal{A}^{(k)}, j \neq i} \alpha_{i,j}^{(k)} = M^{(k)}, \quad \forall k, \label{eq:parallelism_const} \\
    & \alpha_{i,j}^{(k)} \in \{0, 1\}, \quad \forall i, j, k, \label{eq:binary_const}
\end{align}
\end{subequations}
where $K = \lceil \log_2 N \rceil$ is the aggregation depth.
Under this formulation, the recursive update of the active population size is deterministic: $N^{(k+1)} = N^{(k)} - M^{(k)} = \lceil N^{(k)}/2 \rceil$.

Constraint \eqref{eq:role_mutex} unifies the sender and receiver restrictions. It strictly enforces that any active agent $i$ can participate in at most one interaction per round, either as a sender ($\alpha_{i,j}=1$) or as a receiver ($\alpha_{l,i}=1$), but not both. This inherently prevents directionality conflicts and multi-hop relaying within a single time slot.
Constraint \eqref{eq:parallelism_const} mandates that exactly $M^{(k)}$ pairs are formed in round $k$. This explicitly enforces the maximum parallelism strategy, ensuring the aggregation process proceeds at the fastest possible rate and concludes within the minimal depth $K$.
Finally, \eqref{eq:binary_const} imposes binary restrictions on the topology variables.

The coupled problems (P1) and (P2) constitute a hierarchical mixed-integer nonlinear programming (MINLP) problem. The outer-level problem (P2) determines the optimal discrete topology structure, while the inner-level problem (P1) optimizes the continuous physical parameters for each edge within that topology.

\section{Algorithm Design}\label{Sec:ad}
The outer-level topology $\boldsymbol{\alpha}$ depends on the inner-level weights $\{W_{i,j}\}$, while the calculation of $W_{i,j}$ requires solving a non-convex physical optimization problem. Direct global optimization is computationally intractable due to the combinatorial explosion of the topology space. Therefore, we propose a decoupled framework, termed the H-MAP algorithm.

\subsection{Joint Motion and Resource Optimization (Inner-Level)}
The problem (P1) is non-convex primarily due to the coupling of position variables $\mathbf{p}=\{\mathbf{p}_i^{\text{end}}, \mathbf{p}_j^{\text{end}}\}$ in the logarithmic term of the communication rate \eqref{eq:rate_shannon} and the convex velocity terms in the mobility energy \eqref{eq:mobility_energy}.

To address this, we employ the block coordinate descent (BCD) technique. We decompose the decision variables $\mathbf{x}_{ij}$ into two blocks: the motion block $\mathbf{m} = \{\mathbf{p}_i^{\text{end}}, \mathbf{p}_j^{\text{end}}, v_i, v_j\}$, and the resource block $\mathbf{r} = \{p_i^{\text{tx}}, \eta_i, \eta_j\}$.

\subsubsection{Optimal Resource Allocation}
Given fixed motion variables $\mathbf{m}$, problem (P1) can be reduced to:
\begin{subequations}
\begin{align}
    \min_{p_i^{\text{tx}}, \eta_i, \eta_j} \quad & E_i^{\text{comp}}(\eta_i) + E_j^{\text{comp}}(\eta_j) + E_i^{\text{comm}}( p_i^{\text{tx}},\eta_i) \tag{\theequation} \\
    \text{s.t.} \hspace{1.6em} & \eqref{eq:constraint_latency}, \eqref{eq:constraint_p}-\eqref{eq:constraint_etaj} \notag.
\end{align}
\end{subequations}

We first address the receiver's compression ratio $\eta_j$. Since the receiver's local computation energy $E_j^{\text{comp}}(\eta_j) \propto -\ln(\eta_j)$ is monotonically decreasing and a larger $\eta_j$ minimizes the processing delay $T_j^{\text{comp}}$, the optimal solution invariably lies on the upper bound. Thus, we have $\eta_j^* = \eta_j^{\text{req}}$, which fixes the receiver's processing time $T_j^{\text{proc}} = \max\left\{T_j^{\text{mob}}, T_j^{\text{comp}}(\eta_j^*)\right\}$.

To rigorously address the non-convex coupling between transmission power and time, we introduce the transmission duration $t_1$ as a decision variable in lieu of $p_i^{\text{tx}}$. Based on \eqref{eq:comm_time}, the transmission energy $E_i^{\text{comm}}$ can be equivalently expressed as a jointly convex function of $\eta_i$ and $t_1$:
\begin{equation}
    E_i^{\text{comm}}(\eta_i, t_1) = t_1 \frac{B N_0}{h_{ij}} \left( 2^{\frac{\eta_i L_i^{(k)}}{B t_1}} - 1 \right),
    \label{eq:comm_energy_func}
\end{equation}
where $h_{ij}$ denotes the channel gain with given position of both agents.

The transformation of variables requires carefully mapping the physical power constraint $0 < p_i^{\text{tx}} \le P_i^{\max}$ into the time domain. According to \eqref{eq:rate_shannon}, the maximum achievable data rate $R_{ij}^{\max}$ under maximum power is given by:
\begin{equation}
    R_{ij}^{\max} = B \log_2 \left( 1 + \frac{P_i^{\max} h_{ij}}{B N_0} \right).
\end{equation}
Consequently, for a given compressed payload size $\eta_i L_i^{(k)}$, the transmission time is physically lower-bounded by $t_1 \ge \frac{\eta_i L_i^{(k)}}{R_{ij}^{\max}}$.

Therefore, the resource allocation problem can be transformed into:
\begin{subequations}\label{pf.t1}
\begin{align}
\min_{\eta_i, t_1} \quad & \mathcal{J}(\eta_i, t_1) = E_i^{\text{comp}}(\eta_i) + E_i^{\text{comm}}(\eta_i, t_1) \tag{\theequation}\\
\text{s.t.} \quad & T_i^{\text{proc}}(\eta_i) + t_1 \le T_{\max}, \label{eq:sub_delayi}\\
& T_j^{\text{proc}} + t_1 \le T_{\max}, \label{eq:sub_delayj}\\
& t_1 \ge \frac{\eta_i L_i^{(k)}}{R_{ij}^{\max}}, \label{eq:sub_power_limit}\\
& \eta_i^{\min} \le \eta_i \le 1, \label{eq:sub_eta}
\end{align}
\end{subequations}
where $T_i^{\text{proc}}(\eta_i) = \max\left\{T_i^{\text{mob}}, T_i^{\text{comp}}(\eta_i)\right\}$. Then, we have the following theorem.
\begin{theorem}\label{theorem1}
    Problem \eqref{pf.t1} is a convex optimization problem.
\end{theorem}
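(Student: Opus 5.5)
We verify convexity of the objective $\mathcal{J}(\eta_i,t_1)$ and of each constraint in \eqref{pf.t1} separately, over the domain $t_1>0$, $\eta_i\in[\eta_i^{\min},1]$ with $\eta_i^{\min}>0$.

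\textbf{Computation energy.} We have
$E_i^{\text{comp}}(\eta_i)=\tau f_i^2 L_i^{(k)}\bigl[C_{\text{base}}+C_{\text{gen}}\gamma(1-\rho_{i,m}^{(k)})\ln(1/\eta_i)\bigr]$.
Since $\tau f_i^2 L_i^{(k)}C_{\text{gen}}\gamma(1-\rho)\ge 0$ and $-\ln\eta_i$ is convex on $\eta_i>0$, this term is convex in $\eta_i$. It is independent of $t_1$, so it is jointly convex in $(\eta_i,t_1)$.

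\textbf{Communication energy.} This is the key step. Write
\[
E_i^{\text{comm}}(\eta_i,t_1)=\frac{BN_0}{h_{ij}}\,\bigl(t_1 g(\eta_i/t_1)-t_1\bigr),\qquad g(x)=2^{xL_i^{(k)}/B}.
\]
The function $g$ is convex in $x$, so $t_1g(\eta_i/t_1)$ is its perspective and is jointly convex on $t_1>0$. The term $-t_1$ is linear, and $BN_0/h_{ij}>0$ because $h_{ij}$ is fixed by the given positions. Hence $E_i^{\text{comm}}$ is jointly convex.

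If a direct check is preferred, compute the Hessian of $t\,e^{a\eta/t}$ with $a=L_i^{(k)}\ln 2/B$. It equals
\[
e^{a\eta/t}\,\frac{a^2}{t}\begin{pmatrix}1&-\eta/t\\-\eta/t&\eta^2/t^2\end{pmatrix},
\]
which is positive semidefinite with zero determinant. Either route shows $\mathcal{J}$ is convex.

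\textbf{Constraints.} Constraints \eqref{eq:sub_delayj}, \eqref{eq:sub_eta} and \eqref{eq:sub_power_limit} are affine in $(\eta_i,t_1)$: $T_j^{\text{proc}}$ and $R_{ij}^{\max}$ are constants once $\mathbf m$ and $\eta_j^*$ are fixed.

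Constraint \eqref{eq:sub_delayi} is the main obstacle, because $T_i^{\text{proc}}(\eta_i)=\max\{T_i^{\text{mob}},T_i^{\text{comp}}(\eta_i)\}$. We handle it by splitting the max into two constraints:
\[
T_i^{\text{mob}}+t_1\le T_{\max}, \qquad T_i^{\text{comp}}(\eta_i)+t_1\le T_{\max}.
\]
The first is affine. For the second,
\[
T_i^{\text{comp}}(\eta_i)=\frac{L_i^{(k)}}{f_i}\bigl[C_{\text{base}}+C_{\text{gen}}\gamma(1-\rho)\ln(1/\eta_i)\bigr],
\]
which is a nonnegative multiple of $-\ln\eta_i$ plus a constant, hence convex. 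So the second constraint is a sublevel set of a convex function. Equivalently, the pointwise maximum of convex functions is convex, so $T_i^{\text{proc}}(\eta_i)+t_1$ is convex and \eqref{eq:sub_delayi} defines a convex set directly.

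\textbf{Conclusion.} A convex objective minimized over an intersection of convex sets is a convex program, which proves Theorem~\ref{theorem1}. The one subtlety is the need for $t_1>0$ in the perspective argument. This is guaranteed by \eqref{eq:sub_power_limit}, since $\eta_i L_i^{(k)}/R_{ij}^{\max}>0$.
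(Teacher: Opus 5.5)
Your proposal is correct and follows essentially the same route as the paper. You use convexity of $-\ln\eta_i$ for the computation energy, the perspective of $2^x-1$ (composed with the linear map $\eta_i\mapsto L_i^{(k)}\eta_i/B$) for the communication energy, the pointwise-maximum/sublevel-set argument for \eqref{eq:sub_delayi}, and linearity of the remaining constraints; your explicit Hessian check, the splitting of the max into two constraints, and the observation that \eqref{eq:sub_power_limit} forces $t_1>0$ (as the perspective requires) are refinements that make the same argument slightly more careful.
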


\begin{IEEEproof}
The convexity of problem \eqref{pf.t1} is established by examining the convexity of the objective function and the feasible set.

First, consider the objective function $\mathcal{J}(\eta_i, t_1)$. The computation energy term is given by $E_i^{\text{comp}}(\eta_i) = \tau f_i^2 L_i^{(k)} \left[C_{\text{base}} - C_{\text{gen}}\gamma(1-\rho^{(k)})\ln \eta_i\right]$. Since the function $f(x) = -\ln(x)$ is strictly convex for $x > 0$, $E_i^{\text{comp}}(\eta_i)$ is convex with respect to $\eta_i$. The communication energy term $E_i^{\text{comm}}(\eta_i, t_1)$ in \eqref{eq:comm_energy_func} can be viewed as the perspective function of the convex function $g(x) = \frac{B N_0}{h_{ij}} (2^x - 1)$. The perspective transformation is defined as $P_g(u, v) = v g(u/v)$, which preserves joint convexity. By letting $u = \frac{L_i^{(k)}}{B}\eta_i$ and $v = t_1$, and observing that the mapping from $\eta_i$ to $u$ is linear, we conclude that $E_i^{\text{comm}}$ is jointly convex in $\eta_i$ and $t_1$. Consequently, the objective $\mathcal{J}$ is a sum of convex functions.

Next, consider the constraints. Constraint \eqref{eq:sub_delayi} involves $T_i^{\text{proc}}(\eta_i)$, which is the pointwise maximum of a constant $T_i^{\text{mob}}$ and a convex function $T_i^{\text{comp}}(\eta_i) \propto -\ln \eta_i$. Since the pointwise maximum of convex functions preserves convexity, the sublevel set defined by \eqref{eq:sub_delayi} is convex. Constraints \eqref{eq:sub_delayj}, \eqref{eq:sub_power_limit}, and \eqref{eq:sub_eta} are linear inequalities, which define convex half-spaces and polyhedrons.

Since the problem minimizes a convex objective function over the intersection of convex sets, problem \eqref{pf.t1} is strictly convex.
\end{IEEEproof}


\begin{lemma}
    The optimal transmission time is:
    \begin{equation}
        t_1^*(\eta_i) = \min \left\{ T_{\max} - T_i^{\text{proc}}(\eta_i), T_{\max} - T_j^{\text{proc}} \right\}.
        \label{eq:optimal_t1}
    \end{equation}
\end{lemma}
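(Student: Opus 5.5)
Fix $\eta_i$ and view $\mathcal{J}(\eta_i,\cdot)$ as a function of $t_1$ alone. Since $E_i^{\text{comp}}(\eta_i)$ does not depend on $t_1$, the whole argument reduces to showing that $E_i^{\text{comm}}(\eta_i,t_1)$ in \eqref{eq:comm_energy_func} is nonincreasing in $t_1$. Once that is established, the optimizer pushes $t_1$ to the largest value allowed by the constraints of \eqref{pf.t1} that involve it. These are the two deadline constraints \eqref{eq:sub_delayi}--\eqref{eq:sub_delayj}, which bound $t_1$ from above, and the power constraint \eqref{eq:sub_power_limit}, which bounds it from below. The largest admissible $t_1$ is therefore $\min\{T_{\max}-T_i^{\text{proc}}(\eta_i),\,T_{\max}-T_j^{\text{proc}}\}$, provided this value is not smaller than $\eta_i L_i^{(k)}/R_{ij}^{\max}$, i.e. provided $\eta_i$ is feasible. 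Note that $T_j^{\text{proc}}$ is a constant because $\eta_j^*=\eta_j^{\text{req}}$ has already been fixed.

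To prove monotonicity, write $a=\eta_i L_i^{(k)}/B>0$ and $c=BN_0/h_{ij}>0$, so that $E(t)=c\,t\,(2^{a/t}-1)$. Substitute $x=a/t>0$, which gives $E=c\,a\,(2^x-1)/x$. The map $x\mapsto(2^x-1)/x$ is increasing on $x>0$: it is the slope of the secant of the convex function $2^x$ from $0$ to $x$, and secant slopes of a convex function through a fixed point increase. Equivalently, the derivative has the sign of $x2^x\ln 2-2^x+1$, which vanishes at $0$ and has positive derivative $x2^x(\ln 2)^2$. Since $x$ decreases as $t$ increases, $E$ is strictly decreasing in $t_1$. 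This is the familiar energy--delay tradeoff: transmitting more slowly at lower power costs less energy.

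With strict monotonicity in hand, any feasible pair $(\eta_i,t_1)$ with $t_1$ below the upper bound can be improved by increasing $t_1$. Increasing $t_1$ keeps \eqref{eq:sub_power_limit} satisfied, and \eqref{eq:sub_eta} does not involve $t_1$. The optimum therefore has at least one of \eqref{eq:sub_delayi}, \eqref{eq:sub_delayj} active, which yields \eqref{eq:optimal_t1}. In the original variables, this $t_1^*$ corresponds to the power $p_i^{\text{tx}}=\frac{BN_0}{h_{ij}}\big(2^{\eta_i L_i^{(k)}/(Bt_1^*)}-1\big)\le P_i^{\max}$.

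The main subtlety is feasibility rather than the monotonicity itself. One must make sure that for the $\eta_i$ under consideration the upper bound is at least the lower bound $\eta_i L_i^{(k)}/R_{ij}^{\max}$; otherwise that $\eta_i$ is infeasible and is excluded from the outer minimization over $\eta_i$. I would state the lemma as holding for every feasible $\eta_i$, so that it reduces \eqref{pf.t1} to a one-dimensional problem in $\eta_i$. Finally, I would note that strictness gives uniqueness of $t_1^*$.
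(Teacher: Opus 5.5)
Your proposal is correct and follows essentially the same route as the paper. Your derivative sign $x2^x\ln 2-2^x+1>0$, obtained after substituting $x=\eta_i L_i^{(k)}/(Bt_1)$, is exactly the paper's $\partial E_i^{\text{comm}}/\partial t_1=\frac{BN_0}{h_{ij}}\left(2^z-1-z2^z\ln 2\right)<0$. Both arguments therefore push $t_1$ to the tighter of the two deadline bounds, and both add the feasibility check $t_1^*(\eta_i)\ge \eta_i L_i^{(k)}/R_{ij}^{\max}$.
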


\begin{IEEEproof}
Analyzing the partial derivative of the objective with respect to $t_1$, we observe that:
\begin{equation}
    \frac{\partial E_i^{\text{comm}}}{\partial t_1} = \frac{B N_0}{h_{ij}} \left( 2^z - 1 - z \ln 2 \cdot 2^z \right) < 0,
\end{equation}
where $z = \frac{\eta_i L_i^{(k)}}{B t_1}$. The inequality holds because $2^z - 1 < z 2^z \ln 2$ for all $z > 0$. This monotonic decrease indicates that extending the transmission duration always reduces energy consumption. Consequently, the optimal transmission time $t_1^*$ is determined by the binding upper bound imposed by the latency constraints, provided it respects the physical power limitation.
Thus, we have solution \eqref{eq:optimal_t1}.
For the solution to be feasible, the condition $t_1^*(\eta_i) \ge \frac{\eta_i L_i^{(k)}}{R_{ij}^{\max}}$ must hold. If this condition is violated for a specific $\eta_i$, it implies that the compression is insufficient to meet the deadline even with maximum transmit power.
\end{IEEEproof}

Substituting $t_1^*(\eta_i)$ back into the objective function reduces the problem to a single-variable convex optimization over $\eta_i$.

Given the convexity and the bounded domain $[\eta_i^{\min}, 1]$, the optimal $\eta_i^*$ can be efficiently solved using a one-dimensional bisection search.
Then, the optimal transmission power can be given by:
\begin{equation}
    p_i^{\text{tx}*} = \frac{B N_0}{h_{ij}} \left( 2^\frac{\eta_i L_i^{(k)}}{B \tau_{\text{ub}}(\eta_i^*)} - 1 \right).
    \label{eq:optimal_pi}
\end{equation}

\subsubsection{Optimal Motion Planning}
Given the fixed resource allocation parameters $\mathbf{r}$, the motion planning sub-problem can be written as:
\begin{subequations}\label{prob:motion_original_final}
\begin{align}
\min_{\mathbf{p}_i^{\text{end}}, \mathbf{p}_j^{\text{end}}, v_i, v_j} & \sum_{u \in \{i,j\}} E_u^{\text{mob}}(\mathbf{p}_u^{\text{end}}, v_u) + E_i^{\text{comm}}(\mathbf{p}_i^{\text{end}}, \mathbf{p}_j^{\text{end}}) \tag{\theequation} \\
\text{s.t.} \hspace{1.7em} & \frac{\|\mathbf{p}_u^{\text{end}} - \mathbf{p}_u^{\text{start}}\|_2}{v_u} + T_i^{\text{comm}}(\mathbf{p}_i^{\text{end}}, \mathbf{p}_j^{\text{end}}) \le T_{\max},\notag \\
& \hspace{1.7in} \forall u \in \{i,j\}, \label{eq:move_time} \\
& \eqref{eq:constraint_coverage},\eqref{eq:constraint_v} \notag.
\end{align}
\end{subequations}
Problem \eqref{prob:motion_original_final} is non-convex due to the fractional structure of the mobility energy function and the complex coupling of position variables in the communication rate constraint.
To address these challenges, we propose a convex reformulation method based on variable substitution and successive convex approximation (SCA).

First, let $d_u(\mathbf{p}_u^{\text{end}}) = \|\mathbf{p}_u^{\text{end}} - \mathbf{p}_u^{\text{start}}\|_2$ denote the travel distance. To eliminate the non-convex reciprocal dependence on velocity $v_u$, we introduce an auxiliary variable $t_u^{\text{m}}$ representing the motion duration. The relationship between velocity and motion time is given by:
\begin{equation}\label{eq.v}
    v_u = \frac{d_u(\mathbf{p}_u^{\text{end}})}{t_u^{\text{m}}}.
\end{equation}
Substituting \eqref{eq.v} into \eqref{eq:mobility_energy}, the mobility energy can be reformulated as a function of the terminal position $\mathbf{p}_u^{\text{end}}$ and motion time $t_u^{\text{m}}$:
\begin{equation} \label{eq:mobility_convex_form}
    E_u^{\text{mob}}(\mathbf{p}_u^{\text{end}}, t_u^{\text{m}}) = P_u^{\text{static}} t_u^{\text{m}} + \kappa_{1,u} d_u(\mathbf{p}_u^{\text{end}}) + \kappa_{2,u} \frac{d_u^2(\mathbf{p}_u^{\text{end}})}{t_u^{\text{m}}}.
\end{equation}
Then, we have the following lemma.
\begin{lemma} \label{lemma:mobility_convexity}
The reformulated mobility energy function $E_u^{\text{mob}}(\mathbf{p}_u^{\text{end}}, t_u^{\text{m}})$ is jointly convex with respect to $\{\mathbf{p}_u^{\text{end}}, t_u^{\text{m}}\}$.
\end{lemma}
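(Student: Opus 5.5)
The plan is to split \eqref{eq:mobility_convex_form} into its three terms, show each is jointly convex in $(\mathbf{p}_u^{\text{end}}, t_u^{\text{m}})$ on the domain $t_u^{\text{m}}>0$, and conclude by the fact that a nonnegative weighted sum of convex functions is convex. All coefficients $P_u^{\text{static}}, \kappa_{1,u}, \kappa_{2,u}$ are nonnegative physical constants, so the weighted-sum step is immediate once the three pieces are handled.

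\textbf{Linear term.} The term $P_u^{\text{static}} t_u^{\text{m}}$ is linear in $t_u^{\text{m}}$ and independent of $\mathbf{p}_u^{\text{end}}$, hence jointly convex.

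\textbf{Distance term.} The term $\kappa_{1,u} d_u(\mathbf{p}_u^{\text{end}})$ is the Euclidean norm composed with the affine map $\mathbf{p}_u^{\text{end}} \mapsto \mathbf{p}_u^{\text{end}} - \mathbf{p}_u^{\text{start}}$. It is therefore convex in $\mathbf{p}_u^{\text{end}}$, and trivially jointly convex once $t_u^{\text{m}}$ is appended as a dummy variable.

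\textbf{Quadratic-over-linear term.} The main step is the term $\kappa_{2,u} d_u^2/t_u^{\text{m}} = \kappa_{2,u}\|\mathbf{p}_u^{\text{end}} - \mathbf{p}_u^{\text{start}}\|_2^2 / t_u^{\text{m}}$. I would argue this in the same style as the proof of Theorem~\ref{theorem1}, via the perspective function:
\begin{itemize}
\item The map $g(\mathbf{x}) = \|\mathbf{x}\|_2^2$ is convex.
\item Its perspective $t\, g(\mathbf{x}/t) = \|\mathbf{x}\|_2^2/t$ is therefore jointly convex on $\mathbb{R}^2 \times \mathbb{R}_{++}$.
\item Composing with the affine map $(\mathbf{p}_u^{\text{end}}, t_u^{\text{m}}) \mapsto (\mathbf{p}_u^{\text{end}} - \mathbf{p}_u^{\text{start}}, t_u^{\text{m}})$ preserves convexity.
\end{itemize}
Alternatively, one can check the Hessian of $\|\mathbf{x}\|^2/t$ directly. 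It equals $\frac{2}{t^3}\begin{bmatrix} t\mathbf{I} \\ -\mathbf{x}^T\end{bmatrix}\begin{bmatrix} t\mathbf{I} & -\mathbf{x}\end{bmatrix}$, which is positive semidefinite. A further alternative is an epigraph/Schur complement argument: $s t \ge \|\mathbf{x}\|^2$ with $t>0$ is a rotated second-order cone.

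\textbf{Main obstacle.} The delicate point is not to try composing $d_u^2$ (convex but not monotone in a useful sense) with $1/t$ termwise. The joint structure must be captured through the perspective or Hessian argument, and one should note the domain restriction $t_u^{\text{m}}>0$, which is consistent with $v_u>0$. A minor remark is that $d_u$ is nonsmooth at $\mathbf{p}_u^{\text{end}}=\mathbf{p}_u^{\text{start}}$. This does not affect convexity, since the norm and perspective arguments do not require differentiability.
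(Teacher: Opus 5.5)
Your proposal is correct and follows the paper's proof: the same three-term split (affine, Euclidean norm of an affine map, perspective for the quadratic-over-linear term), concluded by summing convex functions. One difference is a small improvement: you take the perspective of $\|\mathbf{x}\|_2^2$ directly, whereas the paper writes the third term as the perspective of the scalar $\kappa_{2,u}x^2$ evaluated at $d_u/t_u^{\text{m}}$, so your version avoids the implicit monotone-composition step the paper's version relies on.
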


\begin{IEEEproof}
The convexity is established by analyzing the three terms in \eqref{eq:mobility_convex_form} individually.
The first term, $P_u^{\text{static}} t_u^{\text{m}}$, is affine and thus convex.
The second term is the composition of a non-negative scalar $\kappa_{1,u}$ and the Euclidean norm, which is known to be convex.
The third term can be recognized as the perspective function of the convex quadratic function $f(x) = \kappa_{2,u} x^2$.
Specifically, let
\begin{equation}
    g(\mathbf{p}_u^{\text{end}}, t_u^{\text{m}}) = t_u^{\text{m}} f\left(\frac{d_u(\mathbf{p}_u^{\text{end}})}{t_u^{\text{m}}}\right) = \kappa_{2,u} \frac{\|\mathbf{p}_u^{\text{end}} - \mathbf{p}_u^{\text{start}}\|_2^2}{t_u^{\text{m}}}.
\end{equation}
Since the perspective operation preserves convexity and the squared Euclidean norm is convex, the third term is jointly convex in $(\mathbf{p}_u^{\text{end}}, t_u^{\text{m}})$.
As the sum of convex functions is convex, the proof is complete.
\end{IEEEproof}

Next, we address the non-convexity arising from the communication rate. Let $s_{ij}$ be a slack variable representing the squared Euclidean distance, i.e., $\|\mathbf{p}_i^{\text{end}} - \mathbf{p}_j^{\text{end}}\|_2^2$.
The achievable transmission rate can be given by:
\begin{equation}
    R_{ij}(s_{ij}) = B \log_2\left(1 + \frac{\xi_{ij}}{s_{ij}^{\delta/2}}\right),
\end{equation}
where $\xi_{ij} = \frac{p_i^{\text{tx}} \beta_0}{B N_0}$ is a constant. Then, we have the following lemma.
\begin{lemma} \label{lemma:rate_convexity}
For a path loss exponent $\delta \ge 2$, the rate function $R_{ij}(s_{ij})$ is convex and monotonically decreasing with respect to $s_{ij} > 0$.
\end{lemma}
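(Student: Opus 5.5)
Set $a=\delta/2\ge 1$ and $\xi=\xi_{ij}>0$, so that $R_{ij}(s)=\frac{B}{\ln 2}\ln\!\left(1+\xi s^{-a}\right)$ for $s>0$. The plan is to compute the first two derivatives in $s$ and read off their signs directly; since $B/\ln 2>0$ is a positive constant, it suffices to study $\phi(s)=\ln(1+\xi s^{-a})$.

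\textbf{Monotonicity.} Write $\phi(s)=\ln(s^a+\xi)-a\ln s$, which separates the two pieces. Differentiating once gives
\begin{equation}
    \phi'(s)=\frac{a s^{a-1}}{s^a+\xi}-\frac{a}{s}=-\frac{a\xi}{s(s^a+\xi)}.
\end{equation}
This is negative for every $s>0$, so $R_{ij}$ is strictly decreasing in $s_{ij}$.

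\textbf{Convexity.} Differentiate the negative of the derivative once more:
\begin{equation}
    \phi''(s)=a\xi\,\frac{\frac{d}{ds}\left[s^{a+1}+\xi s\right]}{\left(s^{a+1}+\xi s\right)^2}=\frac{a\xi\left[(a+1)s^a+\xi\right]}{s^2(s^a+\xi)^2}.
\end{equation}
Every factor is positive for $s>0$, so $\phi''>0$ and $R_{ij}$ is convex.

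The only step needing care is the sign bookkeeping in $\phi''$; the form $\phi'=-a\xi/(s^{a+1}+\xi s)$ makes it immediate, because the denominator is increasing in $s$. This argument actually gives convexity for any $a>0$. The hypothesis $\delta\ge 2$ (i.e., $a\ge1$) is therefore not needed for the lemma as stated. Its natural role is in the subsequent SCA step, where $s_{ij}$ is compared to the convex function $\|\mathbf{p}_i^{\text{end}}-\mathbf{p}_j^{\text{end}}\|_2^2$. There, convexity of $R_{ij}$ in $s_{ij}$ allows a first-order Taylor lower bound, which gives a concave surrogate for the rate constraint.
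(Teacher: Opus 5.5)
Your proof is correct and follows the paper's argument: the paper also writes $R_{ij}'(s)=-\frac{B\theta\xi_{ij}}{\ln 2}\cdot\frac{1}{s^{\theta+1}+\xi_{ij}s}$ with $\theta=\delta/2$ and reads off the same strictly positive second derivative $\frac{B\theta\xi_{ij}}{\ln 2}\cdot\frac{(\theta+1)s^{\theta}+\xi_{ij}}{(s^{\theta+1}+\xi_{ij}s)^{2}}$. Your observation that only $\delta>0$ is needed is right (the paper's sign check likewise uses only $\theta>0$), but your suggestion that $\delta\ge 2$ matters in the SCA step is unsupported, since that step relies only on the convexity and monotonicity you have just shown hold for every $\delta>0$.
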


\begin{IEEEproof}
Let $f(s) = R_{ij}(s) = \frac{B}{\ln 2} \ln(1 + \xi_{ij} s^{-\theta})$, where $\theta = \delta/2 \ge 1$.
The first derivative is given by:
\begin{equation}
    f'(s) = -\frac{B \theta \xi_{ij}}{\ln 2} \cdot \frac{s^{-\theta-1}}{1 + \xi_{ij} s^{-\theta}} = -\frac{B \theta \xi_{ij}}{\ln 2} \frac{1}{s^{\theta+1} + \xi_{ij} s}.
\end{equation}
Since $\theta, \xi_{ij}, s > 0$, we have $f'(s) < 0$, confirming monotonicity.
The second derivative is:
\begin{equation}
    f''(s) = \frac{B \theta \xi_{ij}}{\ln 2} \cdot \frac{(\theta+1)s^\theta + \xi_{ij}}{(s^{\theta+1} + \xi_{ij} s)^2}.
\end{equation}
Given $\theta \ge 1$, all terms in the numerator and denominator are positive for $s>0$. Thus, $f''(s) > 0$, establishing the convexity of $R_{ij}(s_{ij})$.
\end{IEEEproof}

Although $R_{ij}(s_{ij})$ is convex, it appears in the denominator of the time calculation, creating a non-convex feasible region.
To tackle this, we employ the SCA method. At the $l$-th iteration, we approximate $R_{ij}(s_{ij})$ using its first-order Taylor expansion around the local point $\tilde{s}_{ij}^{(l)}$:
\begin{equation} \label{eq:rate_approximation}
    \tilde{R}_{ij}^{(l)}(s_{ij}) \triangleq R_{ij}(\tilde{s}_{ij}^{(l)}) - \nabla_{ij}^{(l)} (s_{ij} - \tilde{s}_{ij}^{(l)}),
\end{equation}
where the gradient coefficient $\nabla_{ij}^{(l)}$ is given by:
\begin{equation}
    \nabla_{ij}^{(l)} = -\frac{\partial R_{ij}}{\partial s_{ij}}\Bigg|_{\tilde{s}_{ij}^{(l)}} = \frac{B \xi_{ij} \delta}{(\ln 4) \tilde{s}_{ij}^{(l)} \left( (\tilde{s}_{ij}^{(l)})^{\delta/2} + \xi_{ij} \right)}.
\end{equation}
Since $R_{ij}(s_{ij})$ is convex, the affine approximation $\tilde{R}_{ij}^{(l)}(s_{ij})$ serves as a global lower bound, i.e., $\tilde{R}_{ij}^{(l)}(s_{ij}) \le R_{ij}(s_{ij})$. Replacing the original rate with this lower bound ensures that the approximate feasible set is a subset of the original one.

We define $t_2$ as the transmission time variable. The convex motion planning problem at the $l$-th iteration is formulated as:
\begin{subequations}\label{prob:motion_convex_final}
\begin{align}
    \min_{\substack{\mathbf{p}_i^{\text{end}}, \mathbf{p}_j^{\text{end}},\\ t_i^\text{m}, t_j^\text{m}, t_2, s_{ij}}} \quad & \sum_{u \in \{i,j\}} E_u^{\text{mob}}(\mathbf{p}_u^{\text{end}}, t_u^{\text{m}}) + p_i^{\text{tx}} t_2 \tag{\theequation}\\
    \text{s.t.} \quad & \frac{\eta_i L_i^{(k)}}{t_2} \le \tilde{R}_{ij}^{(l)}(s_{ij}), \label{eq:con_rate_linear} \\
    & t_u^{\text{m}} + t_2 \le T_{\max}, \quad \forall u \in \{i,j\}, \label{eq:con_total_latency} \\
    & \|\mathbf{p}_i^{\text{end}} - \mathbf{p}_j^{\text{end}}\|_2^2 \le s_{ij}, \label{eq:con_dist_slack} \\
    & \|\mathbf{p}_u^{\text{end}} - \mathbf{c}_u\|_2 \le R_u, \quad \forall u \in \{i,j\}, \label{eq:con_geo_coverage} \\
    & \|\mathbf{p}_u^{\text{end}} - \mathbf{p}_u^{\text{start}}\|_2 \le v_u^{\max} t_u^{\text{m}}, \quad \forall u \in \{i,j\}. \label{eq:con_vel_max}
\end{align}
\end{subequations}
Here, constraint \eqref{eq:con_rate_linear} can be rewritten as $t_2 \tilde{R}_{ij}^{(l)}(s_{ij}) \ge \eta_i L_i^{(k)}$. Since $\tilde{R}_{ij}^{(l)}(s_{ij})$ is affine, this is a rotated second-order cone (SOC) constraint, which is convex.
Constraint \eqref{eq:con_vel_max} is also a standard SOC constraint.
Consequently, problem \eqref{prob:motion_convex_final} is a disciplined convex programming problem efficiently solvable via standard solvers.
The SCA algorithm iteratively updates the expansion point $\tilde{s}_{ij}^{(l+1)} \leftarrow s_{ij}^*$ until convergence, guaranteeing a Karush-Kuhn-Tucker (KKT) stationary point solution.
Then, the optimized velocity can be given by \eqref{eq.v}.

Algorithm \ref{alg:inner_level} outlines the overall procedure of the proposed BCD-based framework.

\begin{algorithm}[t]
\caption{Joint Motion and Resource Optimization for Inner-Level Interaction between a Paired Agents}
\label{alg:inner_level}
\begin{algorithmic}[1]
\REQUIRE Agent indices $i, j$, round index $k$, initial states $\mathbf{p}^{\text{start}}$, payload $L^{(k)}$, and system parameters.
\STATE \textbf{Initialize:} Feasible motion variables $\mathbf{m}^{(0)} = \{\mathbf{p}_i^{\text{end}}, \mathbf{p}_j^{\text{end}}, v_i, v_j\}$. Set iteration $n=0$.
\REPEAT
    \STATE \textbf{Block 1: Optimal Resource Allocation}
    \STATE Set receiver's compression ratio $\eta_j^* = \eta_j^{req}$.
    \STATE Obtain optimal sender compression ratio $\eta_i^*$ via 1D bisection search based on the convex objective in \eqref{pf.t1}.
    \STATE Calculate optimal transmission time $t_1^*$ using \eqref{eq:optimal_t1} and recover transmit power $p_i^{\text{tx}*}$ via \eqref{eq:optimal_pi}.
    \STATE Update resource block $\mathbf{r}^{(n+1)} = \{p_i^{\text{tx}*}, \eta_i^*, \eta_j^*\}$.

    \STATE \textbf{Block 2: Optimal Motion Planning}
    \STATE Initialize SCA iteration $l=0$ and expansion point $\tilde{s}_{ij}^{(0)} = \|\mathbf{p}_i^{\text{end}} - \mathbf{p}_j^{\text{end}}\|_2^2$.
    \REPEAT
        \STATE Compute gradient $\nabla_{ij}^{(l)}$ and construct affine lower bound $\tilde{R}_{ij}^{(l)}(s_{ij})$ using \eqref{eq:rate_approximation}.
        \STATE Solve the convex problem \eqref{prob:motion_convex_final} using CVX.
        \STATE Update expansion point $\tilde{s}_{ij}^{(l+1)} \leftarrow s_{ij}^*$.
        \STATE $l \leftarrow l + 1$.
    \UNTIL{SCA objective value converges.}
    \STATE Update motion block $\mathbf{m}^{(n+1)}$ with optimized values.

    \STATE $n \leftarrow n + 1$.
\UNTIL{Total energy improvement is below threshold $\epsilon$.}
\RETURN $W_{i,j}^{(k)} = E_{(i,j)}^{\text{total}}(\mathbf{x}_{ij}^*)$ and optimized strategy $\mathbf{x}_{ij}^*$.
\end{algorithmic}
\end{algorithm}

\subsection{Dynamic Topology Evolution (Outer-Level)}
The outer-level problem (P2) aims to determine the optimal aggregation topology sequence $\boldsymbol{\alpha}$ to minimize the cumulative network energy.
Since obtaining the globally optimal solution is computationally intractable due to the combinatorial nature of the problem, we propose an efficient algorithm based on a heuristic greedy strategy.
A naive greedy approach, selecting the topology that minimizes energy consumption only for the current round, is suboptimal. This is because the pairing decision in round $k$ determines the active set $\mathcal{A}^{(k+1)}$ and their initial states for round $k+1$, including positions and accumulated data volume. A myopic decision that saves energy in the current round may leave the active agents spatially dispersed, incurring prohibitive mobility and transmission costs in subsequent stages.

To address this inter-round coupling without incurring the computational prohibitiveness of dynamic programming, we propose a potential-field guided predictive matching strategy. This approach augments the instantaneous energy cost with a potential function that estimates the future cost of the active agents.

\subsubsection{Network Centroid and Potential Field}


We postulate that an energy-efficient topology should guide the active agents to converge spatially towards the ``center of mass" of the WAN as the hierarchy climbs. 
This minimizes the expected travel and transmission distances in future rounds. 
Let $\bar{\mathbf{P}}^{(k)}_\text{c}$ denote the geometric centroid of the active agents in round $k$, defined as:
\begin{equation}\label{eq.center}
    \bar{\mathbf{P}}^{(k)}_\text{c} = \frac{1}{|\mathcal{A}^{(k)}|} \sum_{u \in \mathcal{A}^{(k)}} \mathbf{p}_u^{\text{start}}.
\end{equation}
For a candidate pair $(i, j)$ where agent $j$ is the receiver, the optimized terminal position $\mathbf{p}_j^{\text{end}}$ obtained from the inner-level problem (P1) represents its starting position for round $k+1$. We define the future potential cost $\Phi_{j}^{(k)}$ as a heuristic metric representing the projected transport burden for the active agent. It is modeled as the product of the path loss-weighted spatial deviation and the accumulated information volume:
\begin{equation}\label{eq.phi}
    \Phi_{j}^{(k)}\left(\mathbf{p}_j^{\text{end}},\eta_i,\eta_j\right) = \zeta \cdot \underbrace{\left\| \mathbf{p}_j^{\text{end}} - \bar{\mathbf{P}}^{(k)}_\text{c} \right\|_2^{\delta}}_{\text{Spatial Factor}} \cdot \underbrace{L_j^{(k+1)}\left(\eta_i,\eta_j\right)}_{\text{Data Factor}},
\end{equation}
where $\zeta$ is a weighting factor balancing current and future costs. This potential function penalizes pairing decisions that result in active agents drifting away from the network center.

\subsubsection{Augmented Weight Matrix Construction}
For every possible pair of agents $(i, j)$ in the current active set $\mathcal{A}^{(k)}$, we first solve the inner-level problem (P1). Let $W_{i,j}^{(k)}$ denote the minimized pairwise energy returned by the BCD algorithm.
We construct a directed weight matrix $\mathbf{W}^{(k)} \in \mathbb{R}^{N^{(k)} \times N^{(k)}}$, where the entry $w_{i,j}^{(k)}$ represents the augmented cost of agent $i$ transmitting to agent $j$:
\begin{equation}
    w_{i,j}^{(k)} = 
    \begin{cases} 
    W_{i,j}^{(k)} + \Phi_{j}^{(k)}\left(\mathbf{p}_j^{\text{end}*},\eta_i^*,\eta_j^*\right), & \text{if P1 is feasible}, \\
    \infty, & \text{otherwise}.
    \end{cases}
    \label{eq:augmented_weight}
\end{equation}
Note that the matrix is asymmetric since the roles of sender and receiver involve different energy profiles.

\subsubsection{Optimal Matching via Graph Theory}
With the augmented weight matrix, the topology determination for round $k$ is transformed into a minimum weight perfect matching (MWPM) problem. However, a strict perfect matching requires an even number of vertices. In our dynamic system, the number of active agents $N^{(k)}$ may be odd, which implies that one agent must remain unmatched (i.e., idle) and carry its information state directly to the next round.

To rigorously address both even and odd cases within a unified mathematical framework, we construct an augmented graph $\hat{\mathcal{G}}^{(k)} = \left(\hat{\mathcal{A}}^{(k)}, \hat{\mathcal{E}}^{(k)}\right)$. The augmented vertex set $\hat{\mathcal{A}}^{(k)}$ is defined as:
\begin{equation}
    \hat{\mathcal{A}}^{(k)} = 
    \begin{cases} 
    \mathcal{A}^{(k)}, & \text{if } N^{(k)} \text{ is even}, \\
    \mathcal{A}^{(k)} \cup \{v_{\text{virt}}\}, & \text{if } N^{(k)} \text{ is odd},
    \end{cases}
\end{equation}
where $v_{\text{virt}}$ is a virtual node acting as a placeholder for the idle agent. We extend the weight definitions such that the cost of connecting any real agent $i$ to the virtual node is zero, i.e., $w_{i, v_{\text{virt}}}^{(k)} = 0$. This ensures that the matching algorithm effectively selects the agent that is most ``costly" to pair as the idle candidate, minimizing the total operational energy of the active pairs.

The optimization problem is then formulated on this even-cardinality set $\hat{\mathcal{A}}^{(k)}$ to find the boolean matrix $\boldsymbol{\alpha}^{(k)}$:
\begin{subequations}
\begin{align}
    \min_{\boldsymbol{\alpha}^{(k)}} \quad & \sum_{i \in \hat{\mathcal{A}}^{(k)}} \sum_{j \in \hat{\mathcal{A}}^{(k)}, j \neq i} \alpha_{i,j}^{(k)} w_{i,j}^{(k)} \tag{\theequation}\\
    \text{s.t.} \hspace{1.1em} & \sum_{j \in \hat{\mathcal{A}}^{(k)}} \alpha_{i,j}^{(k)} + \sum_{l \in \hat{\mathcal{A}}^{(k)}} \alpha_{l,i}^{(k)} = 1, \quad \forall i \in \hat{\mathcal{A}}^{(k)}, \label{eq:degree_constraint}\\
    & \alpha_{i,j}^{(k)} \in \{0, 1\}.
\end{align}
\end{subequations}
Constraint \eqref{eq:degree_constraint} unifies the flow conservation rules, strictly enforcing that every active agent in the augmented graph participates in exactly one interaction per round, ensuring a valid perfect matching.
Specifically, if a pair $(i, j)$ is formed between two real agents, standard transmission occurs; if a pair $(i, v_{\text{virt}})$ is formed, agent $i$ is designated as idle, which performs no transmission in round $k$ but remains in the active set for round $k+1$, i.e., $i \in \mathcal{A}^{(k+1)}$.

Since the underlying topology allows arbitrary peer-to-peer connections, forming a general graph rather than a bipartite one, we resolve the problem by transforming the directed weights into undirected ones. The symmetrized weight $\tilde{w}_{i,j}^{(k)}$ for any pair $\{i, j\} \subseteq \mathcal{A}^{(k)}$ is given by $\tilde{w}_{i,j}^{(k)} = \min \left\{ w_{i,j}^{(k)}, w_{j,i}^{(k)} \right\}$.
Edges connected to the virtual node are naturally undirected with zero weight. We then employ the Edmonds' Blossom algorithm, which solves the MWPM problem on general graphs with the optimal solution.

The solution $\boldsymbol{\alpha}^{(k)*}$ yields the optimal pairing strategy for the current round. The set of active agents for the next round, $\mathcal{A}^{(k+1)}$, is constructed by collecting the receivers from real pairs and any agent matched with the virtual node. This process repeats recursively until a single root node remains.

\subsection{Overall H-MAP Procedure}
Rather than pursuing the globally optimal solution which is computationally prohibitive, the proposed H-MAP algorithm serves as an efficient heuristic solver. It orchestrates the inner-level optimization and the outer-level predictive matching to construct a high-quality solution. The complete execution flow, covering the recursive aggregation from $N$ agents down to a single root, is formally presented in Algorithm \ref{alg:outer_optimization}.

At each round, a central hub collects lightweight agent states to compute and broadcast the optimal pairing assignments. Subsequently, the paired agents autonomously execute the peer-to-peer move-compute-communicate protocol without further central intervention.

\begin{algorithm}[t]
\caption{Overall H-MAP Algorithm for Energy Minimization in the Considered WAN}
\label{alg:outer_optimization}
\begin{algorithmic}[1]
\REQUIRE Set of agents $\mathcal{N}$, initial positions $\mathbf{p}^{\text{start}}$, initial payloads $L^{(1)}$, and system parameters.
\STATE \textbf{Initialize:} Active set $\mathcal{A}^{(1)} = \mathcal{N}$, round index $k = 1$.
\WHILE{$|\mathcal{A}^{(k)}| > 1$}
    \STATE Calculate network centroid $\bar{\mathbf{P}}^{(k)}_\text{c}$ using \eqref{eq.center}.
    \FOR{each pair $(i, j)$ in $\mathcal{A}^{(k)}$ with $i \neq j$}
        \STATE Obtain optimal interaction cost $W_{i,j}^{(k)}$ and receiver state via Algorithm \ref{alg:inner_level}.
        \STATE Compute directed weight $w_{i,j}^{(k)} = W_{i,j}^{(k)} + \Phi_{j}^{(k)}$ if feasible; else $\infty$.
    \ENDFOR
    \STATE Form augmented graph $\hat{\mathcal{G}}^{(k)}$ by adding a virtual node $v_{\text{virt}}$ if $|\mathcal{A}^{(k)}|$ is odd.
    \STATE Compute undirected weights $\tilde{w}_{i,j}^{(k)} = \min \left\{ w_{i,j}^{(k)}, w_{j,i}^{(k)} \right\}$ for all edges.
    \STATE Find optimal matching $\mathcal{M}^*$ on $\hat{\mathcal{G}}^{(k)}$ using Edmonds' Blossom algorithm.
    \STATE $\mathcal{A}^{(k+1)} \leftarrow \emptyset$.
    \FOR{each matched pair $\{u, v\} \in \mathcal{M}^*$}
        \IF{$v_{\text{virt}} \in \{u, v\}$}
            \STATE Add the real agent $i \in \{u, v\}$ to $\mathcal{A}^{(k+1)}$.
        \ELSE
            \STATE Determine transmission direction: set $\alpha_{i,j}^{(k)} = 1$ if $w_{i,j}^{(k)} \le w_{j,i}^{(k)}$.
            \STATE Update receiver $j$'s state and add $j$ to $\mathcal{A}^{(k+1)}$.
        \ENDIF
    \ENDFOR
    \STATE Update round index $k \leftarrow k + 1$.
\ENDWHILE
\RETURN Topology sequence $\boldsymbol{\alpha}$ and corresponding $\mathbf{x}$ for each paired agent obtained from Algorithm \ref{alg:inner_level}.
\end{algorithmic}
\end{algorithm}

\subsection{Algorithm Analysis}
We analyze the computational complexity of the H-MAP algorithm by evaluating the costs of the inner-level and outer-level components hierarchically.

\textbf{Inner-Level Complexity:} For a single pair of agents, the complexity of Algorithm~\ref{alg:inner_level} is governed by the alternating BCD iterations and the specific SCA sub-iterations within each block.
\begin{itemize}
    \item The resource allocation sub-problem involves a one-dimensional bisection search over $\eta_i$, which requires $\mathcal{O}(\log(1/\epsilon_1))$ operations for a given precision $\epsilon_1$.
    \item The motion planning sub-problem is a convex SOC problem solved via an interior point method. For a problem with $V$ variables, the complexity is approximately $\mathcal{O}(V^{3.5} \log(1/\epsilon_2))$ for a given precision $\epsilon_2$ \cite{lobo1998applications}. In problem \eqref{prob:motion_convex_final}, the number of variables $V=8$.
\end{itemize}
Crucially, the problem dimension $V$ is fixed and independent of the total network size $N$. Let $I_{\text{BCD}}$ and $I_{\text{SCA}}$ denote the iteration counts for the BCD and SCA loops, respectively. The total complexity of one inner-level execution is $\mathcal{O}\left(I_{\text{BCD}} I_{\text{SCA}} V^{3.5} \log(1/\epsilon_2)\right)$. Since all terms are independent of $N$, we treat this cost as a constant factor, denoted by $C_{\text{inner}}$, in the asymptotic analysis of the network-wide algorithm.

\textbf{Outer-Level Complexity:} The outer-level evolution spans $K = \lceil \log_2 N \rceil$ rounds. The complexity consists of two parts:
\begin{itemize}
    \item Weight Matrix Construction: The system computes pairwise costs for all potential links. The number of edges is roughly $\left(N^{(k)}\right)^2$. For each edge, Algorithm \ref{alg:inner_level} is executed once. Thus, the cost is $\mathcal{O}\left((N^{(k)})^2 C_\text{inner}\right)$.
    \item Topology Matching: The Edmonds' Blossom algorithm is employed to solve the MWPM problem on a general graph with $N^{(k)}$ vertices, which has a known time complexity of $\mathcal{O}\left((N^{(k)})^3\right)$ \cite{edmonds1965paths}.
\end{itemize}
The number of active agents reduces geometrically as the hierarchy evolves, i.e., $N^{(k)} \approx N/2^{k-1}$. The total computational complexity $\mathcal{C}_{\text{total}}$ is the sum over all $K$ rounds:
\begin{equation}
    \mathcal{C}_{\text{total}} \approx \sum_{k=1}^{\lceil \log_2 N \rceil} \left[ C_{\text{inner}} \left(\frac{N}{2^{k-1}}\right)^2 + \left(\frac{N}{2^{k-1}}\right)^3 \right].
\end{equation}
This is a geometric series dominated by the first term where $k=1$. Consequently, the asymptotic complexity of the proposed H-MAP framework is $\mathcal{O}(N^3)$. This polynomial complexity ensures that the algorithm is scalable, avoiding the combinatorial explosion associated with global exhaustive search.

\section{Simulation Results}\label{Sec:sr}
In the simulations, we consider a WAN scenario where agents are randomly distributed within a square region of $500 \times 500$~\si{m^2}. The patrol radius for each agent is randomly generated from the interval $[80, 100]$~\si{m}. The initial data payload size for each agent varies between \SI{5}{Mbits} and \SI{10}{Mbits}. Unless otherwise specified, the key system parameters adopted in the simulations are summarized in Table~\ref{tb1}.

\begin{table}[t]
\renewcommand\arraystretch{1.15} 
\centering
\caption{Main System Parameters}
\begin{tabular}{|c||c|}
    \toprule\hline
    \textbf{Parameter}  & \textbf{Value} \\
    \hline
    Number of agents $N$ & $10$ \\ \hline
    Maximum latency $T_{\max}$ & \SI{6}{s} \\ \hline
    Channel Bandwidth $B$ & \SI{1}{MHz}\\ \hline
    Channel reference gain $\beta_0$ & \SI{-40}{dB} \\ \hline
    Path loss exponent $\delta$ & 3 \\ \hline
    Maximum transmit power $P_i^{\max}$ & \SI{30}{dBm} \\ \hline
    Maximum velocity $v_i^{\max}$ & \SI{5}{m/s} \\ \hline
    Computational capacity $f_i$ & \SI{1}{GHz} \\ \hline    
    Effective capacitance coefficient $\tau$ & $10^{-28}$ \\ \hline
    Weighting factor $\zeta$ & $10^{-14}$ \\
    \hline\bottomrule
\end{tabular}
\label{tb1}
\end{table}

\subsection{Inner-Level Analysis}

\begin{figure}[t]
    \centering
    \includegraphics[width=\linewidth]{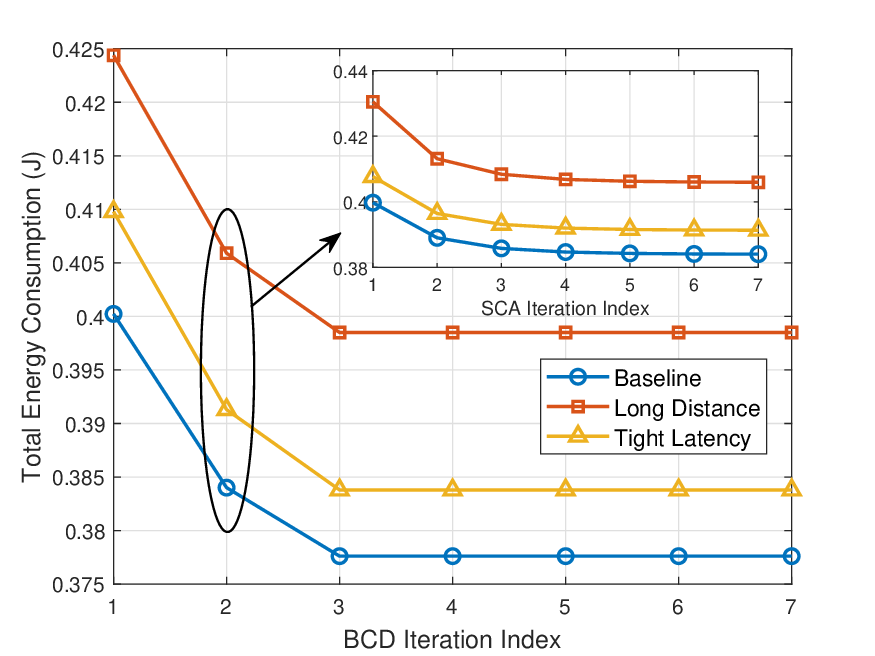}
    \caption{Convergence behavior of the proposed inner-level joint motion and resource optimization algorithm, illustrating the alternating BCD iterations and the SCA sub-iterations within the second BCD block.}
    \label{fig.convergence}
\end{figure}

Fig.~\ref{fig.convergence} illustrates the convergence behavior of the proposed joint motion and resource optimization algorithm. It can be observed that the BCD algorithm exhibits rapid convergence, typically requiring only three alternating iterations to achieve stability. Furthermore, within each BCD block, the SCA procedure also converges quickly, reaching a stable objective value in approximately four to five sub-iterations. These results demonstrate that the proposed inner-level algorithm achieves effective convergence with low computational complexity.

To evaluate the effectiveness of the proposed inner-level algorithm, we compare it against the following five benchmark schemes:
\begin{itemize}
    \item \textbf{Max Transmit Power:} This scheme assumes that each agent always transmits at its maximum power budget.
    \item \textbf{Fixed Speed:} In this scheme, each agent moves at a constant speed of $v_i^{\max}/2$ without optimized motion control.
    \item \textbf{No Semantic:} This scheme represents conventional communication where no semantic compression is applied to the data payload.
    \item \textbf{No RAG:} This scheme excludes the RAG mechanism, implying that the semantic correlation coefficient is fixed at zero.
    \item \textbf{No Motion:} This scheme assumes a static scenario where all agents remain stationary at their initial locations.
\end{itemize}







\begin{figure*}[t]
    \centering
    \vspace{-2em}
    \subfigure{
        \begin{minipage}{0.33\textwidth}
            \centering
            \includegraphics[width=1\textwidth]{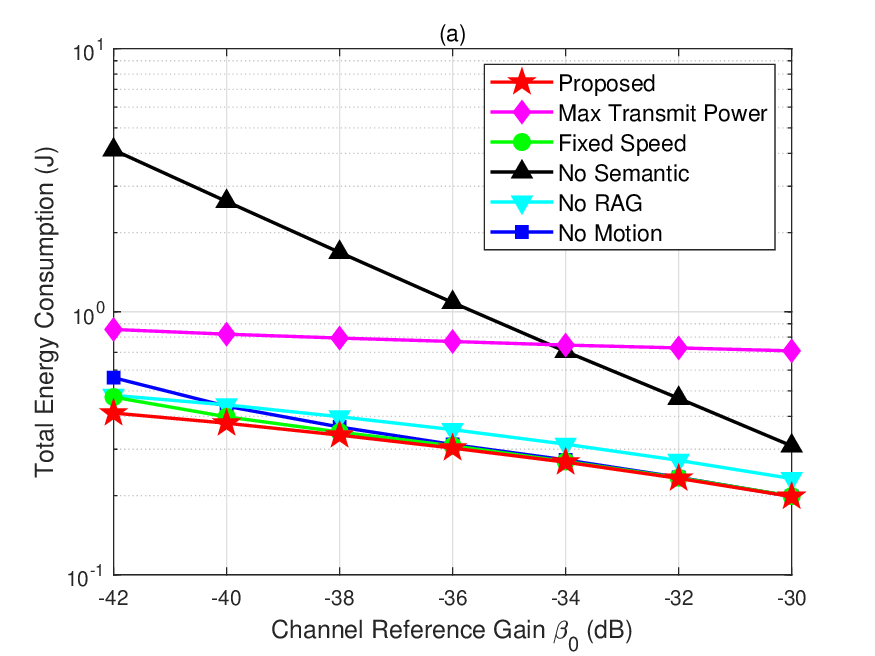}
            \label{fig.beta0}
    \end{minipage}}
    \hspace{-5mm}
    \subfigure{
        \begin{minipage}{0.33\textwidth}
            \centering
            \includegraphics[width=1\textwidth]{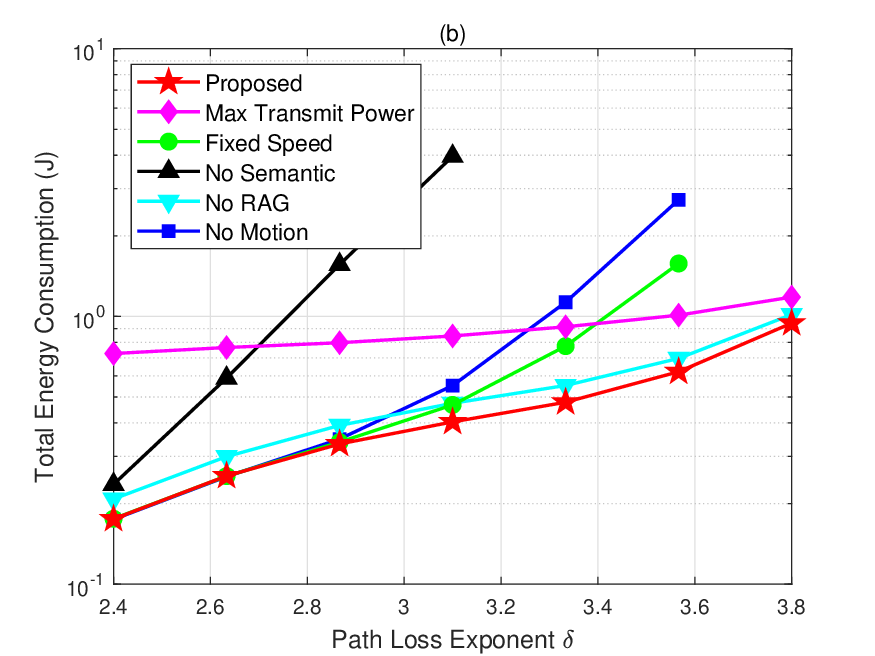}
            \label{fig.delta}	
    \end{minipage}}
        \hspace{-5mm}
        \subfigure{
            \begin{minipage}{0.33\textwidth}
                \centering
                \includegraphics[width=1\textwidth]{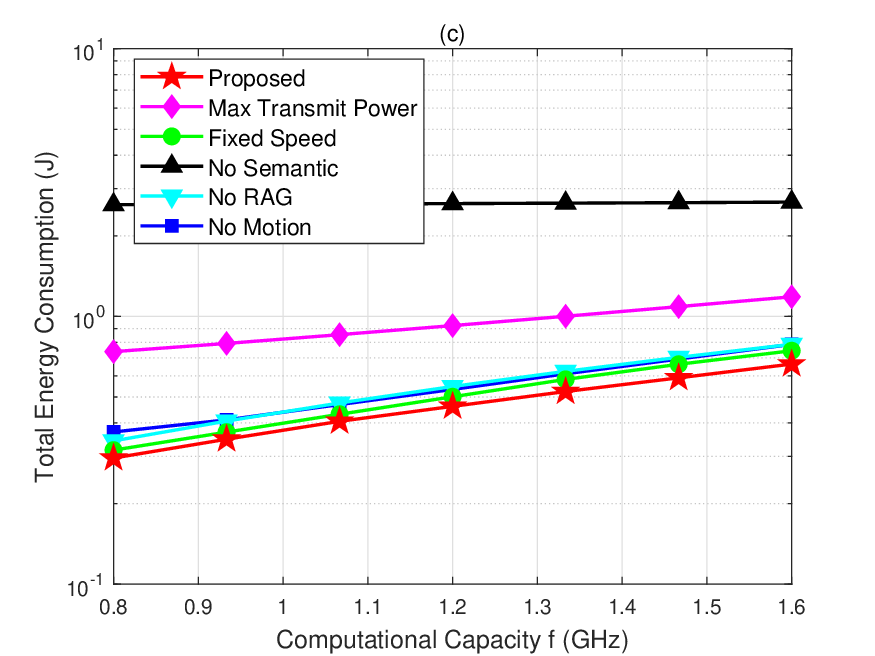}
                \label{fig.f}	
        \end{minipage}}
        \\
        \vspace{-2em}
    \subfigure{
        \begin{minipage}{0.33\textwidth}
            \centering
            \includegraphics[width=1\textwidth]{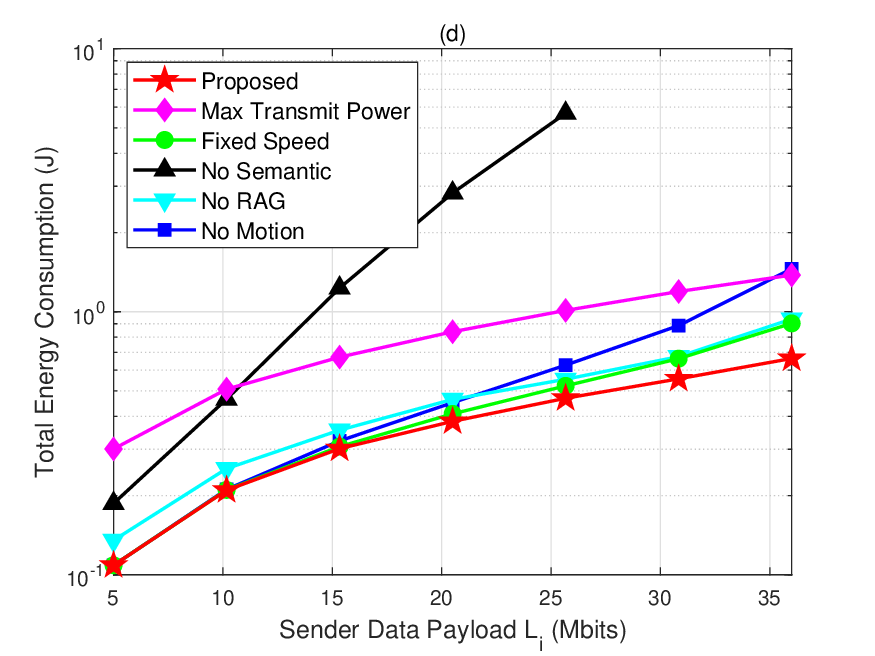}
            \label{fig.payload}
    \end{minipage}}
    \hspace{-5mm}
    \subfigure{
        \begin{minipage}{0.33\textwidth}
            \centering
            \includegraphics[width=1\textwidth]{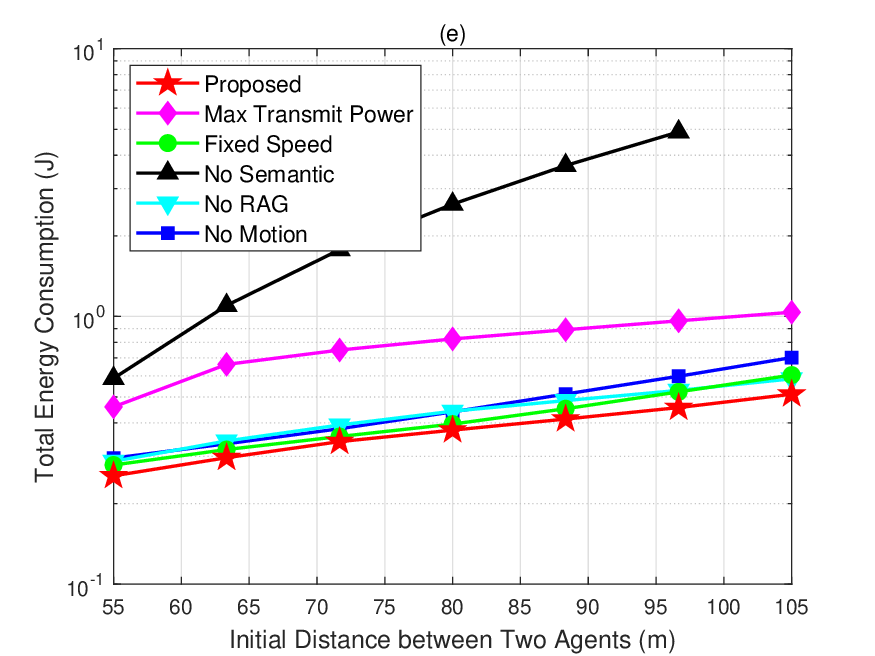}
            \label{fig.distance}	
    \end{minipage}}
        \hspace{-5mm}
        \subfigure{
            \begin{minipage}{0.33\textwidth}
                \centering
                \includegraphics[width=1\textwidth]{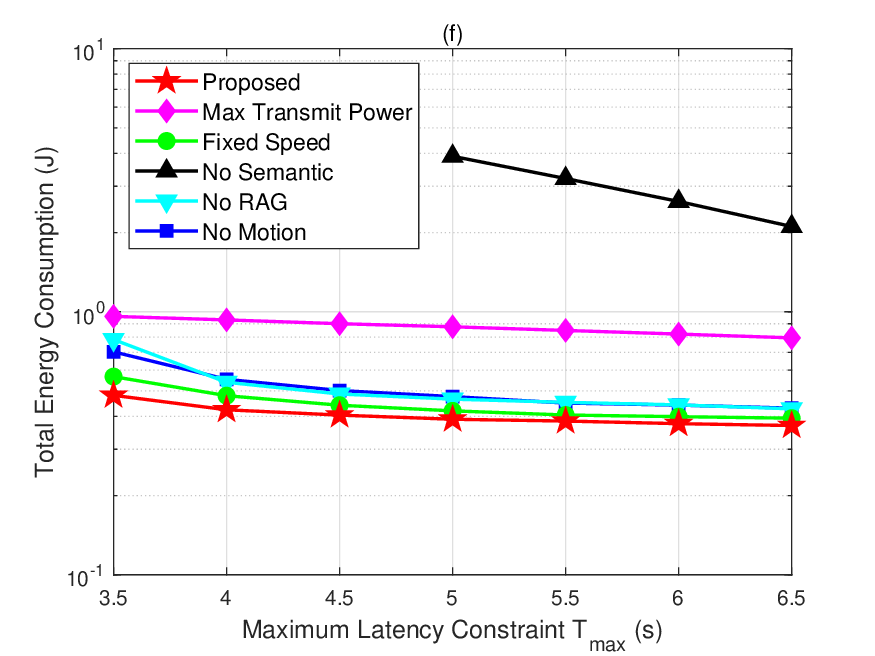}
                \label{fig.Tmax}	
        \end{minipage}}
    \vspace{-1.4em}
    \caption{Total energy consumption versus: (a) Channel reference gain $\beta_0$, (b) Path loss exponent $\delta$, (c) Computational capacity $f$, (d) Sender data payload $L_i$, (e) Initial distance between two agents, (f) Maximum latency constraint $T_{\max}$.}
    \label{fig.minor}
    \vspace{-1em}
\end{figure*}

\begin{figure}[t]
    \centering
    \includegraphics[width=\linewidth]{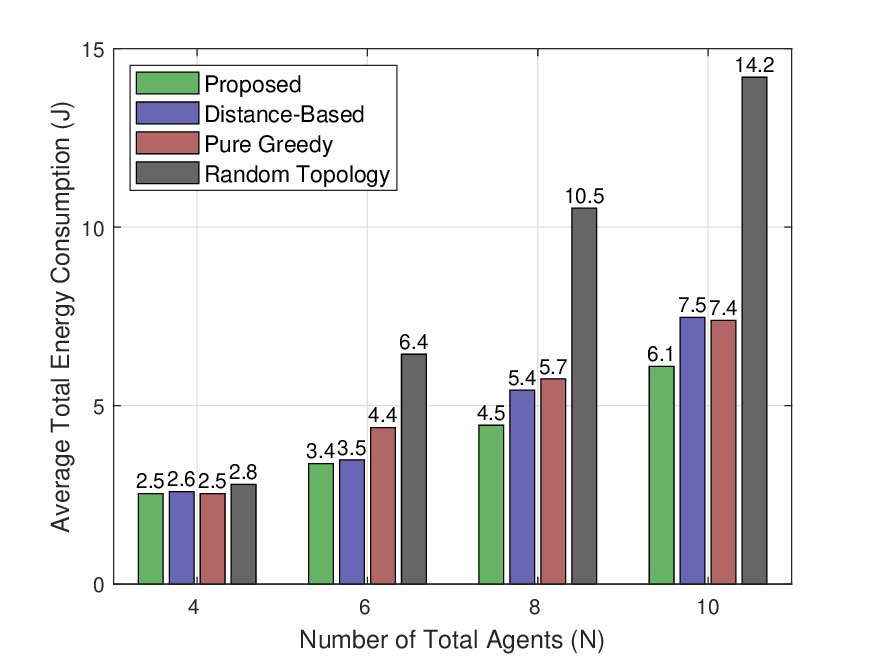}
    \caption{Average total energy consumption versus the number of total agents $N$, which is based on the average result of 10 random runs.}
    \label{fig.outer_N}
\end{figure}

\begin{figure}[t]
    \centering
    \includegraphics[width=\linewidth]{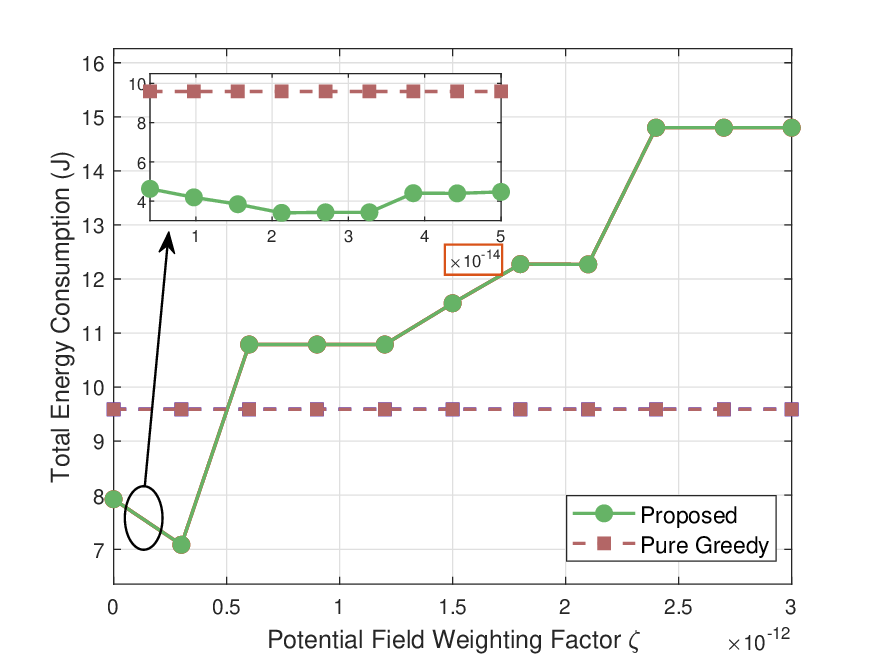}
    \caption{Impact of hyper-parameter $\zeta$ on total energy consumption.}
    \label{fig.zeta}
\end{figure}

\begin{figure*}[htbp]
    \centering
    \includegraphics[width=\linewidth]{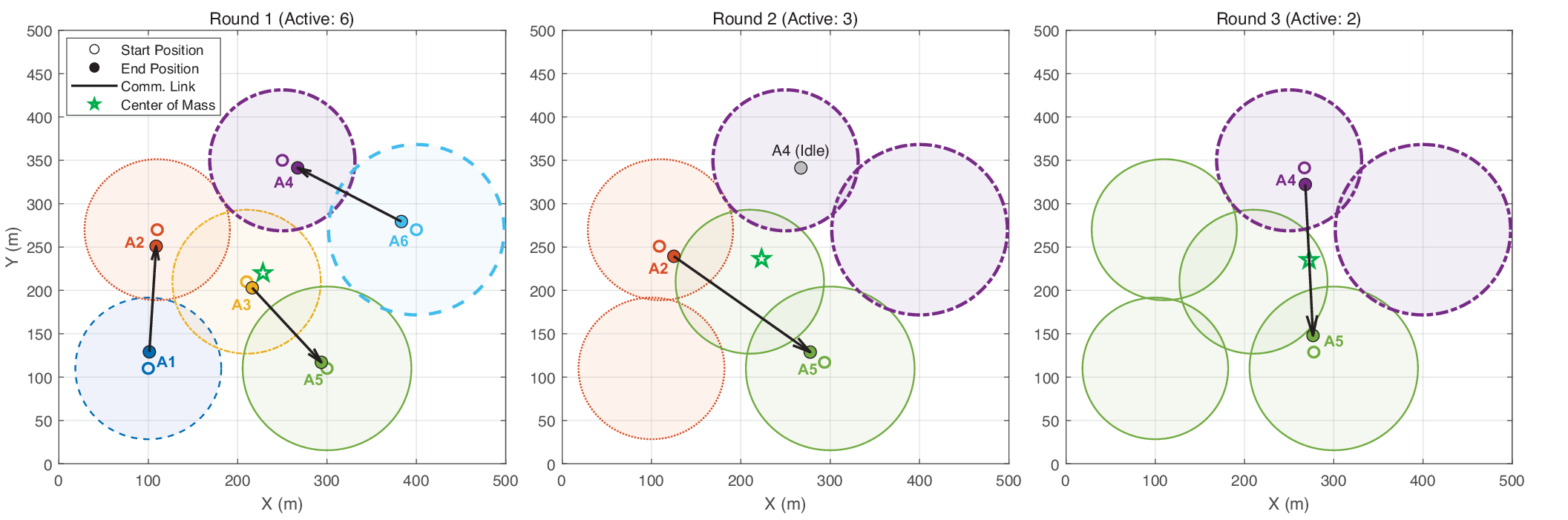}
    \caption{Visualization of the progressive knowledge aggregation process when $N=6$. The arrow signifies the directed communication link. The colored translucent circle denotes the patrol region assigned to each colored agent. When one agent transmits its information to another, its region information is fused to the received agent.}
    \label{fig.visualization}
\end{figure*}

Fig.~\ref{fig.minor} presents a comprehensive sensitivity analysis of the average total energy consumption with respect to key system parameters, including channel conditions, network topology, computational capacity, and task requirements.

\subsubsection{Impact of Channel and Topology Conditions}
Figs.~\ref{fig.beta0}, \ref{fig.delta}, and \ref{fig.distance} investigate the impact of the physical communication environment.
As illustrated in Fig.~\ref{fig.beta0}, the total energy consumption decreases monotonically as the channel reference gain $\beta_0$ increases. This is anticipated since a higher $\beta_0$ indicates superior channel quality, allowing agents to satisfy the data rate requirements with reduced transmission power.
In contrast, Figs.~\ref{fig.delta} and \ref{fig.distance} demonstrate that the energy consumption escalates with an increase in the path loss exponent $\delta$ or the initial inter-agent distance.
Crucially, it is observed that feasible solutions do not exist for ``No Semantic'' scheme when $\delta$ is excessively large or when the distance exceeds a certain threshold.

\subsubsection{Impact of Computation and Task Capabilities}
Figs.~\ref{fig.f} and \ref{fig.payload} illustrate the system energy consumption against the computational capacity $f$ and the sender data payload $L_i$, respectively. 
As shown in Fig.~\ref{fig.f}, the total energy consumption for the proposed and semantic-based benchmark schemes exhibits an upward trend as the computational capacity $f$ increases. This is due to the fact that the dynamic power consumption of the processor scales cubically with the frequency. 
Regarding Fig.~\ref{fig.payload}, the energy consumption increases with the data payload size $L_i$. This is an intuitive result, as a larger payload imposes higher demands on both the semantic processing module and the wireless transmission interface, leading to increased energy usage across all schemes.

\subsubsection{Impact of Latency Constraints}
Finally, Fig.~\ref{fig.Tmax} depicts the energy consumption versus the maximum latency constraint $T_{\max}$.
It is observed that the energy consumption decreases monotonically as the latency constraint is relaxed. This trend underscores the fundamental trade-off between delay and energy efficiency in the proposed system. Specifically, under stringent latency requirements, the agents are compelled to operate in a high-power regime to expedite both the semantic inference and data transmission processes. Conversely, a larger $T_{\max}$ alleviates these strict timing budgets, allowing the system to schedule tasks over a longer duration with significantly reduced power levels.

\subsection{Outer-Level Analysis}
To validate the scalability and effectiveness of the proposed H-MAP algorithm, we compare it against three benchmark schemes, defined as follows:
\begin{itemize}
    \item \textbf{Distance-Based:} This scheme constructs the network topology solely based on physical proximity. It assigns the Euclidean distance between agents as the edge weight and utilizes Edmonds' Blossom algorithm to find the MWPM. It ignores the semantic correlation between agents' data.
    \item \textbf{Pure Greedy:} This scheme adopts a heuristic greedy matching strategy without the guidance of the potential field. Moreover, it employs a greedy algorithm for peer selection rather than the optimal Edmonds' Blossom algorithm.
    \item \textbf{Random Topology:} This baseline randomly pairs agents to form the network topology, serving as a lower bound for performance comparison.
\end{itemize}


Fig.~\ref{fig.outer_N} illustrates the average total energy consumption versus the number of agents, $N$.
A general upward trend in energy consumption is observed for all schemes as $N$ increases. This is attributed to the fact that a larger network scale implies a greater volume of distributed sensory data that must be processed, aggregated, and transmitted to the sink node, thereby naturally accumulating higher energy costs.
However, the proposed algorithm exhibits significantly superior scalability compared to the benchmarks.
As shown in Fig.~\ref{fig.outer_N}, the performance gap between the proposed scheme and the benchmarks widens substantially as the WAN becomes denser.
For instance, at $N=10$, the proposed scheme reduces the energy consumption by approximately $19\%$ compared to the ``Distance-Based'' approach and over $57\%$ compared to the ``Random Topology'' scheme.

The superiority of the proposed framework stems from its holistic optimization strategy.
The ``Pure Greedy'' scheme suffers from a two-fold limitation: the absence of potential field guidance prevents long-term topology planning, and the use of a greedy matching algorithm fails to achieve global optimality even within a single round.
In contrast, the proposed algorithm jointly considers the semantic correlation and physical channel conditions under the guidance of the potential field, while utilizing optimal matching techniques.
This mechanism effectively aggregates agents with high semantic similarity, maximizing the redundancy elimination capability of the RAG mechanism.
Consequently, the proposed semantic-aware topology construction significantly reduces the traffic load propagated through the WAN, ensuring high energy efficiency even in large-scale scenarios.


Fig.~\ref{fig.zeta} illustrates the impact of the potential field weighting factor, $\zeta$, on the total energy consumption.
This parameter plays a crucial role in balancing the trade-off between the immediate energy expenditure and the long-term network efficiency.
For the proposed scheme, the energy consumption exhibits a convex trend.
Initially, increasing $\zeta$ reduces the total energy.
This indicates that an appropriate potential field effectively guides agents towards positions that facilitate efficient semantic aggregation in future rounds, thereby lowering the cumulative energy cost despite potential short-term overheads.
However, when $\zeta$ exceeds the optimal threshold, the energy consumption rebounds.
An excessively large $\zeta$ causes the potential field to dominate the objective function, compelling agents to undertake aggressive movements for marginal future gains.
In this regime, the immediate surge in mobility energy outweighs the long-term communication savings.
In contrast, the ``Pure Greedy'' scheme remains invariant as it operates without the potential field mechanism.


Fig.~\ref{fig.visualization} provides an intuitive visualization of the dynamic topology evolution for a scenario with $N=6$ agents.
The process exhibits a clear hierarchical structure, where the network topology is dynamically updated in rounds.
As indicated by the directed links and the fusion of patrol regions, agents progressively aggregate local semantic information and relay it toward the sink.
This layer-by-layer aggregation effectively demonstrates the progressive knowledge aggregation mechanism, where the set of active transmitting agents shrinks over time while the accumulated semantic context expands, ultimately synthesizing a comprehensive global situation report.

\section{Conclusion}\label{Sec:c}
In this paper, we have proposed a progressive knowledge aggregation framework for agentic AI-empowered WANs. Our investigation reveals that network energy efficiency is implicitly driven by spatio-semantic correlation, where RAG enables hierarchical deduplication. Furthermore, incorporating a physical potential field into topology evolution mitigates the short-sightedness of greedy matching by guiding agents toward spatially convergent states. Crucially, proactive mobility acts as a vital degree of freedom that facilitates deeper compression under strict latency constraints.

Despite these advancements, the current model assumes a simplified path loss channel and relies on centralized coordination. Future research will explore the incorporation of agentic intelligence into dynamic channel models and decentralized coordination protocols.


\bibliographystyle{IEEEtran}
\bibliography{ref}

\end{document}